\documentclass[a4paper]{article}

\usepackage{amsmath,psfrag,stmaryrd}
\usepackage[mathscr]{eucal}
\usepackage{amssymb,color}
\usepackage{graphicx}
\usepackage{latexsym,nicefrac}
\usepackage{tikz}
\usepackage{amsthm}

\usepackage{authblk}
\usetikzlibrary{shapes}
\usepackage{fancyhdr}

\setlength\parindent{1.5em}
\setlength\headheight{3mm}
\setlength\headsep   {10mm}
\setlength\footskip{10mm}
\setlength\textwidth{140mm}
\setlength\textheight{222mm}
\setlength\oddsidemargin{12mm}
\setlength\evensidemargin{38mm}
\setlength\marginparwidth{25mm}
\setlength\topmargin{0mm}


\newcommand{\odd}[1]{\textup{\textsf{Odd}}({#1})}

\newcommand{\codd}[1]{\textup{\textsf{Odd}}[{#1}]}

\newcommand{\ket}[1]{\left| #1 \right\rangle}
\newcommand{\bra}[1]{\left\langle #1 \right|}

\newcommand\restrict[1]{\raisebox{-.5ex}{$|$}_{#1}}

\newcommand\minus{%
  \setbox0=\hbox{-}%
  \vcenter{%
    \hrule width\wd0 height \the\fontdimen8\textfont3%
  }%
}

\renewcommand{\phi}{\varphi}
\newcommand{\comp}[1]{{#1}^c}

\pagestyle{plain}

\newcommand{\href}[1]{$#1$}
\newcommand{\url}[1]{$#1$}

\newtheorem{theorem}{Theorem}
\newtheorem{property}{Property}
\newtheorem{lemma}[theorem]{Lemma}

\theoremstyle{definition}
\newtheorem{definition}[theorem]{Definition}

\theoremstyle{remark}
\newtheorem*{remark}{Remark}

\begin{document}

\title{Determinism and Computational Power of Real Measurement-based Quantum Computation}

\author[1]{Simon Perdrix \thanks{\texttt{simon.perdrix@loria.fr}}}
\author[2]{Luc Sanselme \thanks{\texttt{luc.sanselme@loria.fr}}}
\affil[1]{\small CNRS, LORIA, Universit\'e de Lorraine, Inria Project Team Carte, , France}
\affil[2]{\small LORIA, CNRS, Universit\'e de Lorraine, Inria Project Team Caramba, Lyc\'ee Poincar\'e, France}
%
%

\pagestyle{fancy}
\lhead{Determinism and Computational Power of Real MBQC}
\rhead{S. Perdrix \& L. Sanselme}
\date{}

\maketitle

\begin{abstract}

Measurement-based quantum computing (MBQC) is a universal model for quantum computation.
The combinatorial characterisation of determinism in this model, powered by measurements, and hence, fundamentally probabilistic, is the cornerstone of most of the breakthrough results in this field. 
The most general known sufficient condition for a deterministic MBQC to be driven is that the underlying graph of the computation has a particular kind of flow called Pauli flow. The necessity of the Pauli flow was an open question. We show that Pauli flow is not necessary, providing several counter examples. We prove however that Pauli flow is necessary for determinism in the \emph{real} MBQC model, an interesting and useful fragment of MBQC. 

We explore the consequences of this result for real MBQC and its applications. Real MBQC and more generally real quantum computing is known to be universal for quantum computing. Real MBQC has been used for interactive proofs by McKague. 
The two-prover case corresponds to real-MBQC on bipartite graphs. While (complex) MBQC on bipartite  graphs are universal, the universality of real  MBQC on bipartite graphs was an open question. We show that real bipartite MBQC  is not universal proving that all measurements of real bipartite MBQC can be parallelised leading to constant depth computations. As a consequence, McKague's techniques cannot lead to two-prover interactive proofs.

\end{abstract}

\section{Introduction}

Measurement-based quantum computing \cite{RB01,RBB03} (MBQC for short) is a universal model for quantum computation. This model is not only very promising  in terms of the physical realisations of the quantum computer \cite{Petal,Wetal}, MBQC has also several theoretical advantages, e.g. parallelisation of quantum operations \cite{BKP10,BK07} (logarithmic separation with the traditional model of quantum circuits), blind quantum computing \cite{BFK08} (a protocol for  delegated quantum computing), fault tolerant quantum computing \cite{raussendorf2006fault}, simulation \cite{delfosse2015wigner},  contextuality \cite{raussendorf2013contextuality}, interactive proofs \cite{McKague,BFK08}. 

In MBQC, a computation consists of performing local quantum measurements over a large entangled resource state. The resource state is described by a graph -- using the so-called graph state formalism \cite{HEB04}. The \emph{tour de force} of this model is to tame the fundamental non-determinism of the quantum measurements: the number of possible outputs of a measurement-based computation on a given input is exponential in the number of measurements, and each of these {branches} of the computation is produced with an exponentially small probability. The only known technique to make such a fundamentally probabilistic computation exploitable is to implement a correction strategy which makes the overall computation deterministic: it does not affect the probability for each branch of the computation to occur, but it guarantees that all the branches produce the same output. 

The existence of a {correction} strategy relies on the structures of the entanglement of the quantum state on which the measurements are performed. 
Deciding whether a given resource state allows determinism is a central question in MBQC. Several sufficient conditions for determinism have been introduced. First in \cite{DK06} the notion of \emph{causal flow} has been introduced: if the graph describing the entangled resource state has a causal flow then a deterministic MBQC can be driven on this resource. Causal flow has been generalized to a weaker condition called Generalized flow (Gflow) which is also sufficient for determinism. Gflow has been proved to be necessary for a robust variant of determinism and when roughly speaking there is no  Pauli measurement, a special class of quantum measurements (see section \ref{sec:MBQC} for details) \cite{BKMP07}.  In the same paper, the authors have introduced a  weaker notion of flow called Pauli Flow, allowing some measurements to be Pauli measurements. 
Pauli flow is the weakest known sufficient condition for determinism and its necessity was a crucial open question as the characterisation of determinism in MBQC is the cornerstone of most of the applications of MBQC.

In section \ref{sec:MBQC}, we present the MBQC model, and the tools that come with it.
Our first contribution is to provide a simpler characterisation of the Pauli flow (Proposition \ref{prop:sim}), with three instead of nine conditions to satisfy for the existence of a Pauli flow. Our main contribution is to prove in section \ref{sec:robdet}  that the Pauli flow is not necessary in general -- by pointing out several counter examples --  but is actually necessary for \emph{real} MBQC (Theorem \ref{thm:paulinec}). Real MBQC is a restriction of MBQC where only real observables are used, i.e.~observables which eigenstates are quantum states that can be described using real numbers. Quantum mechanics, and hence models of quantum computation, are traditionally based on complex numbers. Real quantum computing is universal for quantum computation \cite{BV97a} and has been crucially used recently in the study of contextuality and simulation by means of quantum computing by state injection \cite{delfosse2015wigner}. Real MBQC \cite{MP13} may lead to several other applications. One of them is an interactive proof protocol built by McKague \cite{McKague}. McKague introduced a protocol where a verifier using a polynomial number of quantum provers can perform a computation, with the guaranty that, if a prover has cheated, it will be able to detect it. An open question left in \cite{McKague} by McKague is to know whether this model can bring to an interactive proof protocol with only two quantum provers. We answer negatively to this question in section \ref{subs:intpro}. Our third contribution is to point out the existence of a 
kind of supernormal-form 
for Pauli flow in real MBQC on bipartite graphs (Lemma \ref{lem:8}). This result enables us to prove in Theorem \ref{thm:constdepth} that real MBQC on bipartite graphs is not very powerful: all measurements of a real bipartite MBQC can be parallelised. As a consequence, only  problems that can be solved in constant depth can be solved using real bipartite MBQC.

\section{Measurement-based quantum computation, Generalized Flow and Pauli Flow}
\label{sec:MBQC}
\noindent{\bf Notations.} 
We assume the reader familiar with quantum computing notations, otherwise one can refer to Appendix \ref{QC} or to \cite{NC00}. We will use the following set/graph  notations: First of all, the \emph{symmetric difference} of two sets $A$ and $B$ will be denoted $A \Delta B := (A\cup B)\setminus (A\cap B)$. 
We will use intensively the \emph{open} and \emph{closed neighbourhood}. Given a simple undirected graph $G=(V,E)$, for any $u\in V$, $N(u) := \{v \in V~|~ (u,v)\in E\}$ is the (open) neighbourhood of $u$, and $N[u]:=N(u)\cup \{u\}$ is the closed neighbourhood of $u$. For any subset $A$ of $V$, $\odd A:= \Delta_{v\in A} N(v)$ (resp. $\codd A:= \Delta_{v\in A} N[u]$) is the odd  (resp. odd closed) neighbourhood of $A$. 
		Also, we will use the notion of \emph{extensive maps}. A map $ f:A\to 2^B$, with $A\subseteq B$ is extensive if the transitive closure of $\{(u,v): v\in  f(u)\}$ is a strict partial order. We say that $ f$ is extensive with respect to a strict partial order $\prec$ if $(v\in  f(u)\Rightarrow u\prec v)$.

\subsection{MBQC, concretely, abstractly}

In this section, a brief description of the measurement-based quantum computation is given, a more detailed introduction can be found in \cite{DKP07,DKPP09}. 
Starting from a low-level description of measurement-based quantum computation using the so-called patterns of the Measurement-Calculus -- an assembly language composed of 4 kinds of commands: creation of ancillary qubits, entangling operation, measurement and correction -- we end up with a graph theoretical description of the computation and in particular of the underlying entangled resource of the computation. 

\subsection{Measurement-Calculus patterns: an assembly language}

An assembly language for MBQC is the Measurement-Calculus \cite{DKP07,DKPP09}:  a pattern is a sequence of commands, each command is either:
\\-- $N_u$:  initialisation of a fresh qubit $u$ in the state $\ket +=\frac{\ket 0+\ket 1}{\sqrt 2}$; 
\\-- $E_{u,v}$ entangling two qubits $u$ and $v$ by applying Control-Z operation $\Lambda Z:\ket{x,y}\mapsto (-1)^{xy}\ket {x,y}$ to the qubits $u$ and $v$; 
\\-- $M_u^{\lambda_u,\alpha_u}$ measurement of qubit $u$ according to the observable $\mathcal O_{\lambda_u, \alpha_u}$ described below;
\\-- $X_u^{s_v}$ (resp. $Z_u^{s_v}$), a correction which consists of applying Pauli $X:\ket x\mapsto \ket {1-x}$ (resp. $Z:\ket x\mapsto (-1)^x\ket x$) to qubit $u$ iff $s_v$ (the classical outcome of the measurement of qubit $v$) is $1$. 

A pattern is subject to some basic well-formedness conditions like: no operation can be applied on a qubit $u$ after $u$ being measured; a correction cannot depend on a signal $s_u$ if qubit $u$ is not yet measured. 

The qubits which are not initialised using the $N$ command are the input qubits, and those which are not measured are the output qubits.  The measurement of a qubit $u$ is characterized by $\lambda_u\subset\{X,Y,Z\}$ a subset of one or two Pauli operators, and an angle 
$\alpha_u\in [0,2\pi)$: \\
-- when $\lambda_u=\{M\}$ is a singleton, $u$ is measured according to $\mathcal O_{\lambda_u, \alpha_u}:=M$ if $\alpha_u=0$ or $\mathcal O_{\lambda_u, \alpha_u}:=-M$ if $\alpha_u=\pi$. \\-- when $|\lambda_u|=2$, $u$ is measured in the $\lambda_u$-plane of the Bloch sphere with an angle $\alpha_u$, i.e.~according to the observable:
 $$\mathcal O_{\lambda_u, \alpha_u}:=\begin{cases}
 \cos(\alpha_u)X_u+\sin(\alpha_u)Y_u& \text{if $\lambda_u = \{X,Y\}$}\\
 \cos(\alpha_u)Y_u+\sin(\alpha_u)Z_u&\text{if $\lambda_u = \{Y,Z\}$}\\
\cos(\alpha_u)Z_u+\sin(\alpha_u)X_u& \text{if $\lambda_u = \{Z,X\}$}
 \end{cases}
 $$
 Measurement of qubit $u$ produces a classical outcome $(-1)^{s_u}$ where $s_u\in \{0,1\}$ is called \emph{signal}, or simply \emph{classical outcome} with a slight abuse of notation.

\subsection{A graph-based representation}

In the Measurement-Calculus, the patterns are equipped with an equational theory which captures some basic invariant properties, e.g.~two operations acting on distinct qubits commute, or $E_{u,v}$ is equivalent to $E_{v,u}$. It is easy to show using the equations of the Measurement-Calculus that  any 
pattern can be transformed into an equivalent pattern of the form:

$$ \left(\prod^\prec_{u\in \comp O} Z^{s_u}_{\mathtt z(u)}X^{s_u}_{\mathtt x(u)}M_u^{\lambda_u,\alpha_u}\right)\left(
 \prod_{(u,v)\in G}E_{u,v}\right)\left( \prod_{u\in \comp I}N_u\right)$$
where $G=(V,E)$ is a simple undirected graph, $I,O\subseteq V$ are respectively the input and output qubits, and $\mathtt x, \mathtt z: \comp{O} \to 2^{V}$ are two extensive maps, i.e.~the relation $\prec$ defined as the transitive closure of $\{(u,v): v\in \mathtt x(u) \cup \mathtt z(u)\}$  
is a strict partial order. Notice that $O^c:= V\setminus O$ and $X^{s_u}_{\mathtt x(u)} := \prod_{v\in\mathtt x(u)} X_v^{s_u}$. Moreover the product $\prod_{(u,v)\in G}$ means that the indices are the  edges of the $G$, in particular each edge is taken once. 
 
 The septuple $(G,I,O,\lambda, \alpha, \mathtt x, \mathtt z)$ is a graph-based representation which captures entirely the semantics of the corresponding pattern. We simply call an MBQC such a septuple.

 \subsection{Semantics and Determinism}
 
 An MBQC $(G,I,O,\lambda, \alpha, \mathtt x, \mathtt z)$ has a fundamentally probabilistic evolution with potentially $2^{|\comp O|}$ possible branches as the computation consists of $|\comp O|$ measurements. For any $s\in \{0,1\}^{|\comp O|}$, let $A_s :\mathbb C^{\{0,1\}^{I}}\to \mathbb C^{\{0,1\}^O}$ be  $$A_s(\ket \phi)= \left(\prod^\prec_{u\in \comp O} Z^{s_u}_{\mathtt z(u)}X^{s_u}_{\mathtt x(u)}\bra {\phi^{\lambda_u, \alpha_u}_{s_u}}_u\right)\left(
 \!\prod_{(u,v)\in G}\!\!\!\!\Lambda Z_{u,v}\right)\left(\ket \phi\otimes \frac{\sum_{x\in \{0,1\}^{I^c}} \ket x}{\sqrt{2^{|I^c|}}}\right)$$ 
where  $\ket{\phi^{\lambda_u, \alpha_u}_{s_u}}$ is the eigenvalue of $\mathcal O^{\lambda_u,\alpha_u}$ associated with the eigenvalue $(-1)^{s_u}$.
 
 Given an initial state $\ket \phi\in \mathbb C^{\{0,1\}^I}$ and $s\in \{0,1\}^{\comp O}$, the outcome of the computation is the state $A_s \ket \Psi$ (up to a normalisation factor), with probability $\bra \phi A^\dagger_sA_s\ket \phi$. In other words the MBQC implements the cptp-map\footnote{A completely positive trace-preserving map describes the evolution of a quantum system which state is represented by a density matrix. See for instance \cite{NC00} for details.} $\rho \mapsto \sum_{s\in \{0,1\}^{\comp O}} A_s \rho A^\dagger_s$. 
 
 Among all the possible measurement-based quantum computations, those which are deterministic are of peculiar importance. In particular, deterministic MBQC are those which are used to simulate quantum circuits (cornerstone of the proof that MBQC is a universal model of quantum computation), or to implement a quantum algorithm.  An MBQC $(G,I,O,\lambda, \alpha, \mathtt x, \mathtt z)$ is {\bf deterministic} if the output of the computation does not depend on the classical outcomes obtained during the computation: for any input state $\ket \phi\in \mathbb C^{\{0,1\}^I} $ and  branches $s,s'\in  \{0,1\}^{\comp O}$, $A_s\ket \phi$ and $A_{s'}\ket \phi$ are proportional. 
 
 Notice that the semantics of a deterministic MBQC $(G,I,O,\lambda, \alpha, \mathtt x, \mathtt z)$ is entirely defined by a single branch, e.g. the branch $A_{0^{|\comp O|}}$.  Moreover, this particular branch $A_{0^{|\comp O|}}$ is correction-free by construction (indeed all corrections are controlled by a signal, which is $0$ in this particular branch). As a consequence,  intuitively, when the evolution is deterministic, the corrections are only used to make the overall evolution deterministic but have no effect on the actual semantics of the evolution. Thus the correction can be abstracted away leading to the notion of {\bf abstract MBQC} $(G,I,O,\lambda, \alpha)$. There is however a caveat when the branch $A_{0^{|\comp O|}}$ is $0$: for instance  $M_1^{X,\pi}N_1N_2$ and $Z_2^{s_1}M_1^{X,\pi}N_1N_2$ are both deterministic\footnote{In both cases the unique measurement consists of measuring a qubit in state $\ket +$ according to the observable $-X$ which produces the signal $s_1=1$ with probability $1$.} and share the same abstract open graph, however they do not have the same semantics: the outcome of the former pattern is $\frac{\ket 0+\ket 1}{\sqrt 2}$, whereas the outcome of the latter is $\frac{\ket 0-\ket 1}{\sqrt 2}$. 
 
 To avoid these pathological cases and guarantee that the corrections can be abstracted away, a stronger notion of determinism has been introduced in \cite{BKMP07}: 
an MBQC is {\bf strongly deterministic}  when all the branches are not only proportional but equal up to a global phase. The strongness assumption guarantees that for any input state $\ket \phi$, $A_{0^{|\comp O|}}\ket \phi$ is non zero, and thus guarantees that the overall evolution is entirely described by the correction-free branch, or in other words by the knowledge of the abstract MBQC $(G,I,O,\lambda, \alpha)$.

Whereas deterministic MBQC are not necessarily invertible (e.g. $M^{(X,0)}_1N_2$ which maps any state $\ket \phi$ to the state $\ket +$), strongly deterministic MBQC correspond to the invertible deterministic quantum evolutions: they implement isometries ($\exists U : \mathbb C^{\{0,1\}^{I}}\to \mathbb C^{\{0,1\}^O}$ s.t. $U^\dagger U=I$ and $\forall s\in \{0,1\}^{|\comp O|}$, $\exists \theta$ s.t. $A_s=  {2^{-|\comp O|}}{e^{i\theta}}U$).

 We consider a variant of strong determinism which is robust to variation of the angles of measurements (which is a continuous parameter, so a priori subject to small variations in an experimental setting for instance), and to partial computation i.e., roughly speaking if one aborts the computation, the partial outcome does not depend on the branch of the computation.
 
 \begin{definition}[Robust Determinism] $(G,I,O,\lambda, \alpha,\mathtt x, \mathtt z)$ is robustly deterministic if for any lowerset 
   $S\subseteq O^c$ and for any $\beta:S\to [0,2\pi)$, $(G,I,O\cup S^c,\lambda\restrict{S}, \beta,\mathtt x \restrict S, \mathtt z \restrict S)$ is  strongly deterministic, where $S$ is a lowerset for the partial order induced by $\mathtt x$ and $\mathtt z$: $\forall v\in S, \forall u\in O^c$, $v\in \mathtt x(u)\cup \mathtt z(u) \Rightarrow u\in S$.
\end{definition}

The notion of \emph{robust determinism} we introduce is actually a short cut for \emph{uniformly strong and stepwise determinism} which has been already extensively studied in the context of measurement-based quantum computing \cite{BKMP07,DKPP09,MP08-icalp}. 

A central question in measurement-based quantum computation is to decide whether an abstract MBQC can be implemented deterministically: given $(G,I,O,\lambda, \alpha)$, does there exist correction strategies $\mathtt x, \mathtt z$ such that $(G,I,O,\lambda, \alpha,$ $\mathtt x, \mathtt z)$ is (robustly) deterministic? This question is related to the power of postselection in quantum computing: allowing postselection one can select the correc\-tion-free branch and thus implement any abstract MBQC $(G,I,O,\lambda, \alpha)$. Post-selection is a priori a non physical evolution, but in the presence of a correction strategy, postselection can be simulated using measurements and corrections.

The robustness assumption allows one to abstract away the angles and focus on the so-called {\bf open graph} $(G,I,O,\lambda)$ i.e.~essentially the initial entanglement. For which initial entanglement -- or in other words for which resource state --  a deterministic evolution can be performed? This is a fundamental question about the structures and the computational power of entanglement. 

Several graphical conditions for determinism have been introduced: causal flow, Generalized flow (Gflow) and Pauli Flow \cite{DK06,BKMP07,DKPP09}. These are graphical conditions on open graphs which are sufficient to guarantee the  existence of a robust deterministic evolution. Gflow has been proved to be a necessary condition for Pauli-free MBQC (i.e.~for any open graph $(G,I,O,\lambda)$ s.t. $\forall u\in O^c$, $|\lambda_u|=2$). The necessity of Pauli flow was an open question\footnote{In \cite{BKMP07}, an example of deterministic MBQC with no Pauli flow is given. This is however not a counter example to the necessity of the Pauli flow as the example is not robustly deterministic. More precisely not all the branches of computation occur with the same probability: with the notation of Figure 8 in \cite{BKMP07} if measurements of qubits 4,6,8 produce the outcome 0, then the measurement of qubit 10 produces the outcome 0 with probability 1.}. In this paper we show that Pauli flow fails to be necessary in general, but is however necessary for real MBQC, i.e.~when $\forall u\in O^c$, $\lambda_u\subseteq \{X,Z\}$. In the next section, we review the graphical sufficient conditions for determinism.

\subsection{Graphical Conditions for Determinism}

Several flow conditions for determinism have been introduced to guarantee robust determinism. Causal flow has been the first sufficient condition for determinism \cite{DK06}. 
This condition has been extended to Generalized flow (Gflow) and Pauli flow \cite {BKMP07}. Our first contribution is to provide a simpler description of the Pauli flow, equivalent to the original one  (see appendix \ref{app:prop1}): 

\begin{property}
\label{prop:sim}
$(G,I,O,\lambda)$ has a Pauli flow iff there exist a strict partial order $<$ over $O^c$ and $p : O^c \to 2^{I^c}$ s.t. $\forall u \in O^c$, 
\begin{eqnarray*}
		(c_X)\quad X\in \lambda_u&\Rightarrow & u\in \odd{p(u)}\setminus \left(  \bigcup_{\substack{v \geq u \\ v \notin O\cup\{u\}}} \odd{p(v)} \right)\\
(c_Y)\quad Y\in \lambda_u&\Rightarrow & u\in \codd{p(u)}\setminus \left(  \bigcup_{\substack{v \geq u \\ v \notin O\cup\{u\}}} \codd{p(v)} \right)\\
(c_Z)\quad Z\in \lambda_u&\Rightarrow & u\in {p(u)}\setminus \left(  \bigcup_{\substack{v \geq u \\ v \notin O\cup\{u\}}} {p(v)} \right)
\end{eqnarray*}
\noindent where $v\ge u$ iff $\neg (v<u)$
\end{property}

\begin{remark} Notice that the existence of a Pauli  flow forces the input qubits to be measured in the $\{X,Y\}$-plane: If $(G,I,O,\lambda)$ has a Pauli flow then for any $u\in I \cap  O^c$,  $u\notin p(u)$ since $p(u)\subseteq I^c$. It implies, according  to condition $(c_Z)$, that $Z\notin \lambda_u$.
\end{remark}

Gflow and Causal flows are special instances of Pauli flow: A Pauli flow is a Gflow when all measurements are performed in a plane (i.e. $\forall u$, $|\lambda_u|=2$); a Causal flow \cite{DK06} 
is nothing but a Gflow $(p,<)$ such that $\forall u, |p(u)|=1$. GFlow has been proved to be a necessary and sufficient condition for robust determinism:

\begin{theorem}[\cite{BKMP07}]\label{thm:gflow}
Given an abstract MBQC $(G,I,O,\lambda, \alpha)$ such that $\forall u\in O^c, |\lambda_u|=2$, 
{$(G,I,O,\lambda)$ has a GFlow $(p,<)$ if and only if  there exists $\mathtt x, \mathtt z$ extensive with respect to $<$  s.t. 
$(G,I,O,\lambda, \alpha, \mathtt x, \mathtt z)$ is robustly deterministic.}
\end{theorem}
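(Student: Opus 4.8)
The plan is to prove the two implications separately, in both cases exploiting that robust determinism is, by definition, \emph{uniformly strong and stepwise} determinism: for every lowerset $S$ of the order the truncated MBQC is strongly deterministic, so in particular peeling off one maximal measured qubit is itself a strongly deterministic step. This turns the global statement into a local analysis at each measured qubit, run by induction along the partial order, the key algebraic ingredient being the graph-state stabiliser identity $\prod_{v\in A}K_v = X_A Z_{\odd A}$ for the generators $K_v = X_v\prod_{w\in N(v)}Z_w$.

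\emph{Sufficiency (Gflow $\Rightarrow$ robust determinism).} Given a Gflow $(p,<)$ I would take as corrections (essentially) $\mathtt x(u):=p(u)$ and $\mathtt z(u):=\odd{p(u)}$, with the single-qubit slot on $u$ itself dropped or moved between the two maps according to the measurement plane $\lambda_u$; extensiveness with respect to $<$ is exactly the ``future'' part of conditions $(c_X),(c_Y),(c_Z)$ of Property~\ref{prop:sim}. The core is a local claim: when $u$ is measured in its plane, the single-qubit Pauli that $X_{p(u)}Z_{\odd{p(u)}}$ carries on $u$ anticommutes with $\mathcal O_{\lambda_u,\alpha_u}$ -- which is precisely what ``$u\in\odd{p(u)}$'', ``$u\in\codd{p(u)}$'' and ``$u\in p(u)$'' encode for the three planes -- so applying the correction turns the outcome-$1$ eigenprojector on $u$ into the outcome-$0$ one up to a phase, while the remaining part of $X_{p(u)}Z_{\odd{p(u)}}$ is a stabiliser of the entangled resource. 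Propagating the leftover Paulis past the later $\Lambda Z$'s and measurements (standard signal shifting) and iterating along $<$ makes every branch $A_s$ equal to the correction-free branch $A_{0^{|\comp O|}}$ up to a phase; with trace-preservation this forces $A_{0^{|\comp O|}}$ to be a non-zero multiple of an isometry, so the MBQC is \emph{strongly} deterministic. Since the angles $\alpha$ were never used and the same argument applies verbatim to every lowerset, the determinism is robust.

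\emph{Necessity (robust determinism $\Rightarrow$ Gflow).} Assuming $(G,I,O,\lambda,\alpha,\mathtt x,\mathtt z)$ robustly deterministic, I would build a Gflow by descending induction along the order $<$ induced by $\mathtt x,\mathtt z$. Let $u$ be a measured qubit maximal among those not yet treated. By the stepwise property, measuring $u$ and applying $X^{s_u}_{\mathtt x(u)}Z^{s_u}_{\mathtt z(u)}$ is a strongly deterministic step on the stabiliser state carried by the still-quantum qubits, for every input and -- by robustness -- for every value of $\alpha_u$. Writing the outcome-$1$ eigenstate of $\mathcal O_{\lambda_u,\alpha_u}$ as the image of the outcome-$0$ one under the Pauli $O'_u$ orthogonal to the plane, determinism forces $X_{\mathtt x(u)}Z_{\mathtt z(u)}\cdot O'_u$ to fix that stabiliser state projectively for two non-collinear observables of the plane, hence to be a phase times an element of its stabiliser group. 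Decoding this element through the identity above produces a set $p(u)$ -- namely $\mathtt x(u)$ with the $u$-slot adjusted to match $O'_u$ -- with $u\in\odd{p(u)}$, $u\in\codd{p(u)}$ or $u\in p(u)$ according to $\lambda_u$; and the fact that $X_{p(u)}Z_{\odd{p(u)}}$ lies in the stabiliser group of a graph state supported on the not-yet-measured qubits forces $u'\notin\odd{p(u)}$, $u'\notin\codd{p(u)}$, $u'\notin p(u)$ for every already-treated $u'$, i.e.\ the ``future'' clauses. Taking a linear refinement of the order inherited from $\mathtt x,\mathtt z$ as $<$, the pair $(p,<)$ is a Gflow.

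The main obstacle is the necessity direction, specifically the step converting the analytic fact ``for every angle and every truncation the two measurement branches of $u$ differ by a correction realisable as a Pauli on still-quantum qubits'' into the purely combinatorial Gflow inequalities. This requires: (i) a precise description via the stabiliser formalism of which Pauli operators descend to stabilisers of the successive projected resource states -- the content of the ``future'' clauses; (ii) using the freedom in $\alpha_u$ to exclude accidental, angle-specific cancellations and so pin down the genuine anticommutation pattern of each of the three planes (this is also where one would later see, by contrast, what can go wrong in the full Pauli-flow case); and (iii) threading the induction over lowersets so that the local data extracted at $u$ stays consistent with what has already been fixed for the successors of $u$. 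The sufficiency direction, by contrast, is a routine verification once the stabiliser identity and the per-plane anticommutation are in place.
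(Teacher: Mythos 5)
First, a point of reference: the paper does not actually prove this statement. Theorem~\ref{thm:gflow} is imported from \cite{BKMP07}, and the authors explicitly say that the proof of their own Theorem~\ref{thm:paulinec} ``is fundamentally different from the proof that Gflow is necessary for Pauli-free robust determinism''. So the only in-paper material to measure your attempt against is the appendix proof of the real-case analogue, and that comparison is exactly where your sketch runs into trouble. Your sufficiency direction is fine: it is the specialisation of Theorem~\ref{thm:Paulisuf} to planar measurements, with $\mathtt x(u)=p(u)\setminus\{u\}$ and $\mathtt z(u)=\odd{p(u)}\setminus\{u\}$ (automatically extensive by (P1)--(P2)), the per-plane anticommutation of the residual Pauli on $u$ with $\mathcal O_{\lambda_u,\alpha_u}$, and the fact that $\prod_{v\in p(u)}K_v$ stabilises $E_G(\ket\phi\otimes\ket+_{I^c})$ for every input because $p(u)\subseteq I^c$.

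The necessity direction has a genuine gap, sitting precisely where the planar case diverges from the real case. You propose to treat the measurement of each $u$ as a step acting on ``the stabiliser state carried by the still-quantum qubits'' and to read $p(u)$ off an element of its stabiliser group. This works for a minimal $u$: the group of Paulis fixing $E_G(\ket\phi\otimes\ket+_{I^c})$ up to phase for \emph{all} inputs is exactly $\pm\langle K_v\rangle_{v\in I^c}$, and your decoding of $X_{\mathtt x(u)}Z_{\mathtt z(u)}O'_u$ then yields (P1), (P2) and (P4)--(P6). But for a qubit with predecessors, the state just before its measurement has already been hit by projections $\bra{\phi_0^{\alpha_j}}_{v_j}$ at \emph{generic} angles: it is not a stabiliser state, and the set of Paulis fixing it projectively is not controlled by the graph-state generators. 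The paper's Claims 1--3 solve exactly this problem in the real case, and they work only because there every measurement can be fixed to a Pauli observable, so all intermediate states stay inside the stabiliser formalism and their stabilisers can be recomputed by Gaussian elimination. In the planar case the repair is different in kind: one must invoke robustness over the \emph{earlier} angles $\alpha_0,\dots,\alpha_{k-1}$ --- the vectors $\bigotimes_j\ket{\phi_0^{\alpha_j}}$ span the full space of the already-measured qubits as those angles vary --- to lift the identity $\bra{\phi_0^{\alpha_k}}_{v_k}\bigl(X_{\mathtt x(v_k)}Z_{\mathtt z(v_k)}O'_{v_k}-e^{i\theta}\bigr)\ket{\phi_k}=0$ back to an identity on the unprojected resource $E_G(\ket\phi\otimes\ket+_{I^c})$, after which the minimal-qubit analysis applies verbatim. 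Your item~(ii) only uses the freedom in the angle of the qubit currently being measured; the missing ingredient is the freedom in the angles of the qubits already consumed. Two smaller remarks: with only two angles the branch phases $e^{i\theta_\alpha}$ may a priori differ, so pinning down $X_{\mathtt x(u)}Z_{\mathtt z(u)}O'_u$ as an eigenoperator needs a third angle or a short extra phase argument; and there is no consistency to ``thread'' between successors --- each $p(u)$ is extracted independently and the Gflow conditions are checked pointwise, so the descending induction you describe is unnecessary scaffolding.
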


Pauli flow is  the most general known sufficient condition for determinism for robust determinism:

\begin{theorem}[\cite{BKMP07}] \label{thm:Paulisuf}If $(G,I,O,\lambda)$ has a Pauli flow $(p,<)$,  then for any $ \alpha : O^c\to [0,2\pi)$, $(G,I,O,\lambda,\alpha, \mathtt x,\mathtt z)$ is robustly deterministic where $\forall u\in O^c$, 
\begin{align*}
\mathtt x(u)&= \{v\in p(u) ~|~ u< v\}\\
\mathtt z(u)&= \{v\in \odd{p(u)}~|~ u< v\}
\end{align*}
  
\end{theorem}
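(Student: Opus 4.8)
The plan is to establish strong determinism of $(G,I,O,\lambda,\alpha,\mathtt x,\mathtt z)$ by showing that every branch operator $A_s$ equals the correction-free one $A_{0^{|O^c|}}$ up to a scalar of modulus one that does not depend on the input; robust determinism then follows from the uniformity of the argument in the angles and from the stability of Pauli flow under restriction to lowersets. I would first record two elementary facts. \emph{Byproducts}: for any measured qubit $u$, the two eigenstates of $\mathcal O_{\lambda_u,\alpha_u}$ satisfy $\ket{\phi_1^{\lambda_u,\alpha_u}}=e^{i\gamma_u}P_u\ket{\phi_0^{\lambda_u,\alpha_u}}$ for a Pauli $P_u\notin\lambda_u$ (the third Pauli when $|\lambda_u|=2$; either one when $|\lambda_u|=1$, a choice fixed later), hence $\bra{\phi_{s_u}^{\lambda_u,\alpha_u}}_u=e^{i\gamma_u s_u}\bra{\phi_0^{\lambda_u,\alpha_u}}_u P_u^{s_u}$. \emph{Stabilisers}: writing $E:=\prod\Lambda Z$ and $\ket{G_\phi}:=E\bigl(\ket\phi\otimes\tfrac{\sum_x\ket x}{\sqrt{2^{|I^c|}}}\bigr)$, the operators $K_v:=X_v\prod_{w\in N(v)}Z_w$ stabilise $\ket{G_\phi}$ for every $v\in I^c$, and for $A\subseteq I^c$ one has $\prod_{v\in A}K_v=X_A Z_{\odd A}$, whose component on a vertex $w$ is $X_w$, $Z_w$, $Y_w$ or $I_w$ according to whether $w$ lies in $A\setminus\odd A$, in $\odd A\setminus A$, in $A\cap\odd A$, or in neither.

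The prescribed $\mathtt x,\mathtt z$ are extensive with respect to $<$ (since $v\in\mathtt x(u)\cup\mathtt z(u)$ forces $u<v$), so $A_s$ is well defined. In $A_s$ I would replace every $\bra{\phi_{s_u}}_u$ by $e^{i\gamma_u s_u}\bra{\phi_0}_u P_u^{s_u}$ and commute each byproduct $P_u^{s_u}$ and each correction $Z_{\mathtt z(u)}^{s_u}X_{\mathtt x(u)}^{s_u}$ rightwards past all the projectors: this only produces $s$-dependent signs, because $\mathtt x(u)\cup\mathtt z(u)\subseteq\{v:u<v\}$, so the correction of $u$ touches no vertex whose projector lies to its right, while $P_u^{s_u}$ touches only $u$, whose projector lies to its left. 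This rewrites $A_s$ as $\varepsilon(s)\bigl[\prod^\prec_u\bra{\phi_0}_u\bigr]\widetilde W_s\ket{G_\phi}$ with $\varepsilon(s)$ a fourth root of unity and $\widetilde W_s=\prod_{u:s_u=1}\bigl(Z_{\mathtt z(u)}X_{\mathtt x(u)}P_u\bigr)$ a Pauli. The heart of the proof is the identity $Z_{\mathtt z(u)}X_{\mathtt x(u)}P_u=\pm\bigl(\prod_{v\in p(u)}K_v\bigr)Q^{(u)}$, where $Q^{(u)}$ is supported on the measured vertices $w\neq u$ with $\neg(u<w)$: on $u$ itself, conditions $(c_X),(c_Y),(c_Z)$ of Property~\ref{prop:sim} (two of them for a plane measurement, one for a Pauli measurement) force the component of $\prod_{v\in p(u)}K_v$ to be exactly $P_u$ up to a phase; by definition $\mathtt x(u)$ and $\mathtt z(u)$ are precisely its components on $\{v:u<v\}$ (all outputs among them); and on each remaining vertex $w\neq u$ with $\neg(u<w)$ the exclusion clause of $(c_X)$/$(c_Y)$/$(c_Z)$ \emph{at $w$} (applied with the measured vertex $u\ge w$, $u\neq w$) pins that component to be $I_w$ or a Pauli fixing $\bra{\phi_0^{\lambda_w,\alpha_w}}_w$ up to a unit scalar. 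Hence $\widetilde W_s=\pm\bigl(\prod_{u:s_u=1}\prod_{v\in p(u)}K_v\bigr)R_s$ with $R_s$ supported on measured vertices and fixing every projector up to unit scalars; since the double product of $K$'s stabilises $\ket{G_\phi}$ and $R_s$ is absorbed by the projectors, one gets $A_s=c(s)A_{0^{|O^c|}}$ where $c(s)$, a product of the fixed unit scalars and of $\varepsilon(s)$, has modulus one and does not depend on $\ket\phi$.

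From $A_s=c(s)A_{0^{|O^c|}}$ with $c(s)$ input-independent, the MBQC is deterministic. As the MBQC is trace-preserving, $\sum_s A_s^\dagger A_s=I$, so $\bigl(\sum_s|c(s)|^2\bigr)A_{0^{|O^c|}}^\dagger A_{0^{|O^c|}}=I$; with $|c(s)|=1$ this forces $A_{0^{|O^c|}}^\dagger A_{0^{|O^c|}}=2^{-|O^c|}I$, so $A_{0^{|O^c|}}$ is nonzero and a scaled isometry and all the $A_s$ agree up to a global phase --- strong determinism. Robustness: nothing above uses the values of the angles (for $|\lambda_u|=1$ only $\alpha_u\in\{0,\pi\}$ is allowed, and the byproduct and fixing computations hold in both cases), and for any lowerset $S$ the restriction $(p\restrict{S},<\restrict{S})$ is again a Pauli flow --- each of the three conditions, exclusion clauses included, only weakens when the universe shrinks to $S$ --- with $\mathtt x\restrict{S},\mathtt z\restrict{S}$ still of the prescribed form, so the same argument applies to $(G,I,O\cup S^c,\lambda\restrict{S},\beta,\mathtt x\restrict{S},\mathtt z\restrict{S})$.

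I expect the genuine work to be the vertex-by-vertex comparison of $Z_{\mathtt z(u)}X_{\mathtt x(u)}P_u$ with $\prod_{v\in p(u)}K_v$: proving that the component of $\prod_{v\in p(u)}K_v$ on each already-measured vertex $w$ is a Pauli fixing $\bra{\phi_0^{\lambda_w,\alpha_w}}_w$ calls for a case analysis over the six possible values of $\lambda_w$, invoking the appropriate exclusion clause(s) of Property~\ref{prop:sim}, together with careful bookkeeping of the Pauli phases (the $Y$-type components, and the Pauli-measured vertices where this component may be a nontrivial Pauli rather than the identity, being the delicate cases). I would also remark that the naive step-by-step induction --- keeping the partial branch operators proportional after each successive measurement --- breaks down, because $\prod_{v\in p(u)}K_v$ may act nontrivially on later vertices incomparable to $u$ for which no correction is issued; collecting all the stabilisers at once avoids this. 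A slicker-looking alternative would reduce the Pauli-flow case to the Gflow case of Theorem~\ref{thm:gflow} via local complementations that eliminate the Pauli-measured vertices, but making that reduction rigorous is itself substantial, so I would carry out the direct stabiliser argument.
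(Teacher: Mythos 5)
The paper does not actually prove this theorem --- it is imported from \cite{BKMP07}, and the forward direction of Theorem \ref{thm:paulinec} simply invokes it --- so there is no in-paper proof to compare against; your argument is a correct reconstruction of the standard stabilizer-formalism proof from that reference (rewrite each branch via byproducts and corrections, match $Z_{\mathtt z(u)}X_{\mathtt x(u)}P_u$ against the graph-state stabilizer $\prod_{v\in p(u)}K_v$ using the flow conditions at $u$ and the exclusion clauses at the vertices $w$ with $\neg(u<w)$, and absorb the residue into the projectors), and I have checked that your six-case analysis at both $u$ and $w$ goes through. The one point worth making explicit is the convention that output vertices count as greater than every measured vertex, so that $p(u)\cap O\subseteq\mathtt x(u)$ and $\odd{p(u)}\cap O\subseteq\mathtt z(u)$; without that reading of the statement the correction sets would miss the output components of the stabilizer and the argument would fail.
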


Is there a converse? 
This is the purpose of next section.

\section{Characterising Robust Determinism}
\label{sec:robdet}

In this section, we show the main  result of the paper: 
Pauli flow is necessary for robust determinism in the real case, i.e.~when all the measurements are in the $\{X,Z\}$-plane ($\forall u, \lambda_u\subseteq \{X,Z\}$). 

We investigate in the subsequent sections the consequences  of this result for real MBQC which is a universal model of quantum computation with several crucial applications. 

A \textbf{real open graph} $(G,I,O,\lambda)$ is an open graph such that $\forall u\in O^c, \lambda_u\subseteq \{X,Z\}$. We define similarly \textbf{real abstract MBQC} and \textbf{real MBQC}. Pauli flow conditions on real open graphs can be simplified as follows:

\begin{property} A real open graph $(G,I,O,\lambda)$  has a Pauli flow iff there exist a strict partial order $<$ over $O^c$ and $p : O^c \to 2^{I^c}$ s.t. $\forall u \in O^c$, 
\begin{eqnarray*}
		(i)\quad X\in \lambda_u&\Rightarrow & u\in \odd{p(u)}\setminus \left(  \bigcup_{\substack{v\geq u \\ v \notin O\cup\{u\} }} \odd{p(v)} \right)\\
(ii)\quad Z\in \lambda_u&\Rightarrow & u\in {p(u)}\setminus \left(  \bigcup_{\substack{v\geq u \\ v \notin O\cup\{u\} }} {p(v)} \right)\\
\end{eqnarray*}
\end{property}

\begin{theorem}
\label{thm:paulinec}
Given a real abstract MBQC $(G,I,O,\lambda, \alpha)$, 
{$(G,I,O,\lambda)$ has a Pauli flow $(p,\prec)$ if and only if  there exist $\mathtt x, \mathtt z$ extensive with respect to $\prec$  s.t. 
$(G,I,O,\lambda,$ $ \alpha, \mathtt x, \mathtt z)$ is robustly deterministic.}
\end{theorem}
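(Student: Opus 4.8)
The plan is as follows. The backward implication is essentially Theorem~\ref{thm:Paulisuf}: from a Pauli flow $(p,\prec)$ one takes $\mathtt x(u)=\{v\in p(u):u\prec v\}$ and $\mathtt z(u)=\{v\in\odd{p(u)}:u\prec v\}$, which are extensive with respect to $\prec$ by construction, and Theorem~\ref{thm:Paulisuf} makes the resulting MBQC robustly deterministic. So the work is all in the forward direction, and realness will be used exactly once, in a local rigidity lemma.

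First I would localise robust determinism to single measurements. Fix $\mathtt x,\mathtt z$ witnessing robust determinism, let $\prec$ be the partial order they induce, and fix a linear extension $u_1,\dots,u_n$ of $\prec$ on $O^c$. For each $k$, both $S_k=\{u_1,\dots,u_k\}$ and $S_{k-1}$ are lowersets, so robust determinism says the truncated MBQC that measures exactly $S_k$ (resp.\ $S_{k-1}$) is strongly deterministic for every choice of angles. Since the $S_k$-computation is the $S_{k-1}$-computation followed by the single step ``measure $u_k$ in its plane, then apply $X^{s_{u_k}}_{\mathtt x(u_k)}Z^{s_{u_k}}_{\mathtt z(u_k)}$'', and since strong determinism of the $S_{k-1}$-computation makes its $0^{k-1}$ branch a nonzero (hence, up to scalar, injective) real linear map, this last step must be strongly deterministic on the image of that branch, for every angle of $u_k$. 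Picking the angles on $S_{k-1}$ so that all earlier measurements are Pauli, this image carries an explicit real stabilizer structure read off from $G$, $I$ and the recorded Pauli outcomes; in particular $u_k$ has a stabilizer of the form $Z_{u_k}\otimes(\cdots)$, so the two outcome-branches of $u_k$ are related by a fixed operator $\pm Z_M$, with $M$ the ``neighbourhood of $u_k$ seen through the earlier Pauli measurements''.

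The heart of the argument is then a purely local rigidity lemma: for a single qubit $u$ measured by an $\{X,Z\}$-plane observable $\cos\beta\,P+\sin\beta\,Q$ on such a real stabilizer state, with correction $X_{\mathtt x(u)}Z_{\mathtt z(u)}$ on outcome $1$, when is the computation strongly deterministic for \emph{all} $\beta$? Writing the two branch-states as real combinations of $\ket a$ and $Z_M\ket a$ with trigonometric coefficients in $\beta/2$ and demanding proportionality with a \emph{real} constant for every $\beta$, I expect to be forced to conclude that $X_{\mathtt x(u)}Z_{\mathtt z(u)}$ acts on $\mathrm{span}\{\ket a,Z_M\ket a\}$ (and on the inputs) exactly like a product $\prod_{v\in A}K_v$ of stabilizers whose $u$-component is $Y_u$ when $P=X$ and $X_u$ when $P=Z$. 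Translating this stabilizer back to $G$ and writing $\prod_{v\in p(u)}K_v=\pm X_{p(u)}Z_{\odd{p(u)}}$ gives a set $p(u)\subseteq I^c$ with $u\in\odd{p(u)}$ when $X\in\lambda_u$, with $u\in p(u)$ when $Z\in\lambda_u$, and with $\mathtt x(u)=\{v\in p(u):v\succ u\}$, $\mathtt z(u)=\{v\in\odd{p(u)}:v\succ u\}$; the qubits of $p(u)$ that lie among the already-measured ones are exactly past Pauli-measured qubits. This is exactly the step that fails in general MBQC: over $\mathbb C$ there is a further family of correctors whose proportionality constant is $\pm i$ (this is what the counterexamples of Section~\ref{sec:robdet} exploit), and only the realness hypothesis kills them. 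I expect this real-versus-complex dichotomy --- together with the bookkeeping needed to express the stabilizer group of the truncated state in terms of the $K$-operators of the original $G$ once earlier Pauli measurements are present --- to be the main difficulty.

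Finally I would assemble a genuine Pauli flow. The family $\{p(u)\}_{u\in O^c}$ with a linear order refining the measurement order already satisfies the ``local'' parts of the simplified conditions $(i)$--$(ii)$ for real open graphs; the remaining minimality clauses $u\notin\bigcup_{v\ge u}\odd{p(v)}$ and $u\notin\bigcup_{v\ge u}p(v)$ I would obtain by a focusing argument, analogous to the one for generalized flow: whenever some $p(v)$ violates a minimality clause at an earlier $u$, replace $p(v)$ by $p(v)\,\Delta\,p(u)$, which removes the violation and, by $\Delta$-linearity of $\odd{\cdot}$ and $p$ together with $u\in\odd{p(u)}$ (resp.\ $u\in p(u)$), preserves the conditions already secured; well-foundedness of $\prec$ gives termination. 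Along the way the input caveat is handled: $p(u)\subseteq I^c$ forces $u\notin p(u)$ for $u\in I\cap O^c$, so $(ii)$ rules out $Z\in\lambda_u$ for such $u$ --- consistently, the rigidity lemma shows no $X_{\mathtt x(u)}Z_{\mathtt z(u)}$ can correct such a measurement, so robust determinism never holds in that case. This produces a Pauli flow $(p,<)$ of $(G,I,O,\lambda)$.
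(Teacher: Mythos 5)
Your overall architecture matches the paper's: the backward direction is Theorem~\ref{thm:Paulisuf}, and the forward direction proceeds by specialising the earlier measurements to Pauli observables, exploiting the stabilizer structure of the resulting real state, and reading off from strong stepwise determinism an identity of the form $T_{v_k}X_{\mathtt x(v_k)}Z_{\mathtt z(v_k)}=\pm X_{B}Z_{\odd{B}\Delta D}\prod_{u\in F}M_u$ from which $p(v_k):=B$ is extracted. However, there is a genuine gap, and it sits exactly where you claim the argument is finished. The sets $B$, $D$ and $F$ produced by your rigidity lemma are computed for \emph{one} choice of input states and \emph{one} Pauli specialisation $M_u\in\lambda_u$ of each earlier plane-measured qubit. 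Nothing in a single-choice analysis prevents a previously measured qubit $u$ with $\lambda_u=\{X,Z\}$ from appearing in $F$ (with, say, $M_u=Z$), in which case $u\in\odd{B}=\odd{p(v_k)}$ while $X\in\lambda_u$, violating condition $(c_X)$. Your parenthetical assertion that ``the qubits of $p(u)$ that lie among the already-measured ones are exactly past Pauli-measured qubits'' is precisely the statement that needs proof, and it cannot be obtained locally: one must show that $B$, $D$ and $F$ are independent of the choice of inputs and of the Pauli specialisations (Claims~4 and~5 of the paper), by comparing two runs, observing that a product of $X$'s and $Z$'s can never produce a $Y=iXZ$ factor, and constructing explicit input states that would contradict strongness if the decompositions differed. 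This cross-specialisation consistency is where realness does its essential work --- not in a $\pm i$ proportionality constant. Indeed, the counterexamples of Figure~\ref{fig:counterexample} are not killed by any local phase argument: there, each Pauli specialisation of qubit~$1$ admits a Pauli flow with the right order, but the two flows disagree, which is exactly the failure mode your proposal does not exclude.

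A secondary concern is your final focusing step. In the paper the minimality clauses are not obtained by post-processing: they fall out directly because, by extensivity, the corrector $X_{\mathtt x(v_k)}Z_{\mathtt z(v_k)}$ acts as the identity on every $u\preceq v_k$ with $u\neq v_k$, so any such $u$ in $\odd{B}$ must be cancelled by $F_Z\subseteq\lambda^{-1}(\{Z\})$ (and dually for $B$ and $F_X$), once the independence claims guarantee $F\subseteq\lambda^{-1}(\{X\})\cup\lambda^{-1}(\{Z\})$. Your proposed substitution $p(v)\mapsto p(v)\,\Delta\,p(u)$ is not obviously sound here: since $\odd{p(u)}$ and $p(u)$ may contain vertices later than $u$ (indeed they must, to define $\mathtt x$ and $\mathtt z$), the substitution can destroy the already-secured conditions $v\in\odd{p(v)}$ or $v\in p(v)$ at $v$, and termination is not clear. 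You would need to either justify this focusing carefully or, better, dispense with it by proving the independence claims, after which minimality is automatic.
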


The proof of Theorem \ref{thm:paulinec} is given in appendix. 
The proof 
is fundamentally different from the proof that Gflow is necessary for Pauli-free robust determinism (Theorem \ref{thm:gflow} in \cite{BKMP07}). Roughly speaking, the proof that Pauli flow is necessary goes as follows: first we fix the inputs to be either $\ket 0$ or $\ket +$ and all the measurements to be Pauli measurements (i.e.~if $\lambda_u=\{X,Z\}$ we fix the measurement of $u$ to be either $X$ or $Z$).  For each of these choices the computation can be described in the so-called stabilizer formalism which allows one to point out the constraints the corrections should satisfy for each of these particular choices of inputs and measurements. Then, as the corrections of a robust deterministic MBQC should not depend on the choice of the inputs and the angles of measurements, one can combine the constraints the corrections should satisfy and show that they coincide with the Pauli flow conditions.

\begin{remark}
We consider in this paper a notion of real MBQC which corresponds to a constraint on the measurements ($\forall u\in O^c, \lambda_u\in \{X,Z\}$), it can also be understood as an additional  constraint on the inputs: the input of the computation is in $\mathbb R^I$ instead of $\mathbb C^I$. This distinction might be important, for instance the pattern $M_1^YN_2$ is strongly deterministic on real inputs but not on arbitrary complex inputs. It turns out that the proof of Theorem \ref{thm:paulinec} only consider real inputs, and as a consequence is valid in both cases (i.e.~when both  inputs and measurements are real ; or when inputs are complex and measurements are in the $\{X,Z\}$-plane). 
\end{remark}

Pauli flow is necessary for real robust determinism. This property is specific to real measurements: Pauli flow is not necessary in general  even when the measurements are restricted to one of the other two planes of measurements.
  In the following $\{X,Y\}$-MBQC (resp. $\{Y,Z\}$-MBQC) refers to MBQC where all measurements are performed in the $\{X,Y\}$-plane (resp. $\{Y,Z\}$-plane).

\begin{property}
There exists   robustly deterministic  $\{X,Y\}$-MBQC (resp. $\{Y,Z\}$-MBQC) $(G,I,O,\lambda, \alpha, \mathtt x,\mathtt z)$  such that $(G,I,O,\lambda)$ has no Pauli flow $(p,\prec)$ where $\mathtt x$ and $\mathtt z$ are extensive with respect to $\prec$.

\end{property}

\begin{proof} We consider the pattern $\mathcal P = Z_3^{s_2}M_2^{X,0}X_2^{s_1}M_1^{\{X,Y\},\alpha}E_{1,2}E_{1,3}N_1N_2N_3$ which is an implementation of the $\{X,Y\}$-MBQC given in Fig \ref{fig:counterexample} (the other example is similar). Notice that the correction $X_2^{s_1}$ is useless as qubit $2$ is going to be measured according to $M^X$. Thus $\mathcal P$ has the same semantics as $\mathcal P' = Z_3^{s_2}M_2^{X,0}M_1^{\{X,Y\},\alpha}E_{1,2}E_{1,3}N_1N_2N_3$.  
 Notice in $\mathcal P'$ that the two measurements commute since there is no dependency between them, leading to the pattern $\mathcal P'' = M_1^{\{X,Y\},\alpha}Z_3^{s_2}M_2^{X,0}E_{1,2}E_{1,3}N_1N_2N_3$. It is easy to check that $\mathcal P''$ has a Pauli flow so is robustly deterministic. All but the stepwise property are transported by the transformations from $\mathcal P''$ to $\mathcal P$. Notice that $\mathcal P'$ is not stepwise deterministic as $M_1^{\{X,Y\},\alpha}E_{1,2}E_{1,3}N_1N_2N_3$ is not deterministic. However,  $\mathcal P$ enjoys the stepwise property  since $X_2^{s_1}M_1^{\{X,Y\},\alpha}E_{1,2}E_{1,3}N_1N_2N_3$ has a Pauli flow so is robustly deterministic. 
Finally, it is easy to show that the open graph has no Pauli flow $(p,\prec)$ such that $1\prec 2$, which is necessary to guarantee that $\mathtt x$ is extensive with respect to $\prec$.
\end{proof}

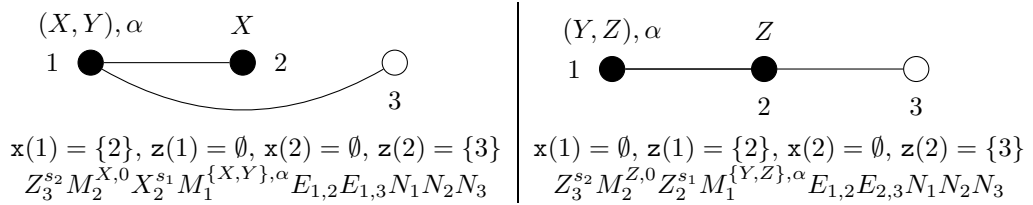
\begin{figure}[!h]
		\centering
    \setlength{\parindent}{0pt}
		\begin{tabular}{lc|c}
				&	
      \begin{tikzpicture}
			  \node[draw, circle, fill=black] (1) at (0,0) {};
			  \node[circle] () at (5,0) {};
			  \node[] (1h) at (0,0.5) {$(X,Y),\alpha$};
			  \node[] (1b) at (-0.5,0) {1};

			  \node[draw, circle] (3) at (4,0) {};
			  \node[] (2h) at (4,0.5) {};
			  \node[] (1b) at (4,-0.5) {3};
			  
			  \node[draw, circle, fill=black] (2) at (2,0) {};
			  \node[] (3h) at (2,0.5) {$X$};
			  \node[] (1b) at (2.5,0) {2};

				\draw (1) edge[out=-30,in=210,-] (3);
			  \draw[-,>=latex] (1) -- (2);
      \end{tikzpicture}

				&
      \begin{tikzpicture}
			  \node[draw, circle, fill=black] (1) at (0,0) {};
			  \node[circle] () at (5,0) {};
			  \node[] (1h) at (0,0.5) {$(Y,Z),\alpha$};
			  \node[] (1b) at (-0.5,0) {1};

			  \node[draw, circle] (3) at (4,0) {};
			  \node[] (2h) at (4,0.5) {};
			  \node[] (1b) at (4,-0.5) {3};
			  
			  \node[draw, circle, fill=black] (2) at (2,0) {};
			  \node[] (3h) at (2,0.5) {$Z$};
			  \node[] (1b) at (2,-0.5) {2};

			  \draw[-,>=latex] (1) -- (3);
			  \draw[-,>=latex] (1) -- (2);
      \end{tikzpicture}\\
				&
				$\mathtt x(1)=\{2\}$, $\mathtt z(1)=\emptyset$, $\mathtt x(2)=\emptyset$, $\mathtt z(2)=\{3\}$
				&
				$\mathtt x(1)=\emptyset$, $\mathtt z(1)=\{2\}$, $\mathtt x(2)=\emptyset$, $\mathtt z(2)=\{3\}$\\
				&
				$Z_3^{s_2}M_2^{X,0}X_2^{s_1}M_1^{\{X,Y\},\alpha}E_{1,2}E_{1,3}N_1N_2N_3$ 
				&
				$Z_3^{s_2}M_2^{Z,0}Z_2^{s_1}M_1^{\{Y,Z\},\alpha}E_{1,2}E_{2,3}N_1N_2N_3$

		\end{tabular}\\

		\caption{Robustly deterministic $\{X,Y\}$-MBQC and $\{Y,Z\}$-MBQC with no compatible Pauli flow. The two MBQC are described by means of there abstract MBQC $(G,I,O,\alpha)$ and the corrective maps $\mathtt x$ and $\mathtt z$. In both cases there is no input and the output is located on qubit $3$. A description using the measurement-pattern formalism is also provided (commands should be read from right to left). Notice that the only order that makes $\mathtt x$ and $\mathtt z$ extensive has to verify $1\prec 2$, and there is no Pauli flow for this order.}
		\label{fig:counterexample}
\end{figure}

\begin{remark}This is the last step of the proof of Theorem \ref{thm:paulinec} which fails with the examples of Figure \ref{fig:counterexample}. For instance in the $\{X,Y\}$-MBQC example, in both cases of Pauli measurements of qubit $1$ (according to $X$ or according to $Y$), 	a Pauli flow exists, sharing the same partial order $1\prec 2$. However the two Pauli flows are distinct and none of them is a Pauli flow when qubit $1$ is measured in the $\{X,Y\}$-plane.

\end{remark}

\begin{remark} The examples given in figure \ref{fig:counterexample} do have a Pauli flow but with a partial order not compatible with the order of measurements. It is important that the orders of the flow and the measurements coincide for guaranteeing that the depth of the flow (longest increasing sequence) corresponds to the depth of the MBQC. Because of the logarithmic separation between the quantum circuit model and MBQC in terms of depth (e.g. PARITY can be computed with a constant quantum depth MBQC but requires a logarithmic depth quantum circuit) \cite{BKP10}, it is also important that Pauli flow characterises not only the ability to perform a robust deterministic evolution, but characterizes also the depth of such evolution. There exists an efficient polynomial time which, given an open graph, compute a Gflow of optimal depth  (when it exists) \cite{MP08-icalp}, the existence of such an algorithm in the Pauli case is an open question.  
\end{remark}

\section{Applications: Computational Power of Real Bipartite MBQC}

In this section we focus on the real MBQC which  underlying graph are bipartite (real bipartite MBQC for short). Bipartite graphs  (or equivalently 2-colorable graphs)  play an important role in MBQC, the square grid is universal for quantum computing: any quantum circuit can be simulated by an MBQC whose underlying graph is a square grid. The brickwork graph \cite{BFK08} is bipartite and universal for $\{X,Y\}$-MBQC. Regarding real MBQC,  the (non bipartite) triangular grid is universal for real MBQC \cite{MP13} but there is no known universal  family of bipartite graphs. We show in this section that there is no universal family of bipartite graphs for real MBQC, by showing that any real bipartite MBQC can be done in constant depth.

\subsection{Real bipartite MBQC in constant depth}
In this section we show that real bipartite MBQC can always be parallelized:

\begin{theorem}\label{thm:constdepth}
All measurements of a robustly deterministic real bipartite MBQC can be performed in parallel. 
\end{theorem}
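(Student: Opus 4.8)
The plan is to recast the statement in terms of Pauli flow and then show that, in the bipartite real case, a Pauli flow can always be chosen with \emph{empty} partial order. By Theorem \ref{thm:paulinec}, a robustly deterministic real MBQC is exactly one whose open graph $(G,I,O,\lambda)$ carries a Pauli flow $(p,\prec)$, and in the construction of Theorem \ref{thm:Paulisuf} the corrections controlled by a signal $s_u$ land only on qubits $v$ with $u\prec v$; hence the number of measurement rounds is the length of the longest chain of $\prec$. Thus ``all measurements in parallel'' is precisely the assertion that a Pauli flow can be chosen whose order $\prec$ is the empty relation (depth $1$). It therefore suffices to prove: a real bipartite open graph that has a Pauli flow has one with empty order, in which case every qubit of $O^c$ can be measured simultaneously and $\mathtt x,\mathtt z$ act only after the single measurement round.

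The engine is a bipartite decoupling of the empty-order conditions. Write the two parts of $G$ as $A\sqcup B$, work over $\mathbb F_2$, identify $p(v)$ with its characteristic vector and $\odd{S}$ with $\Gamma\chi_S$, where the adjacency matrix is block off-diagonal with biadjacency $B$. Since $(\odd{S})$ restricted to $A$ equals $B$ applied to $S\cap B$, and restricted to $B$ equals $B^{\mathsf T}$ applied to $S\cap A$, splitting each correction set as $p(v)=(p(v)\cap A)\sqcup(p(v)\cap B)$ makes the real flow conditions $(i)$--$(ii)$ with $\prec=\emptyset$ split into two \emph{independent} systems. Writing $C_A,C_B$ for the $A$- and $B$-restrictions of all correction sets, $C_A$ must realise the $Z$-conditions for $A$-vertices (its row $u$ on the measured columns is $e_u$) together with the $X$-conditions for $B$-vertices (row $u$ of $B^{\mathsf T}C_A$ is $e_u$), while $C_B$ must realise the $Z$-conditions for $B$-vertices and the $X$-conditions for $A$-vertices. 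Because there are no intra-part edges, the $X$- and $Z$-requirements never interfere \emph{within} a part — exactly the feature that fails for general graphs such as the (universal) triangular grid.

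I would then solve each decoupled system using the supernormal form of Lemma \ref{lem:8}. The supernormal form is what guarantees correction sets for which the self-membership requirements ($u\in p(u)$ for $Z$, $u\in\odd{p(u)}$ for $X$) and the exclusion requirements ($u\notin p(v)$ and $u\notin\odd{p(v)}$ for \emph{every} other measured $v$) hold at once — that is, against the full antichain rather than only against the $v$ lying above $u$ in some order. I would derive the solvability of each of the two systems (each asking for a matrix whose relevant submatrix, via the identity for the $Z$-part and via $B$ or $B^{\mathsf T}$ for the $X$-part, has standard-basis rows on the measured indices) from the solvability already guaranteed by the existence of \emph{some} Pauli flow, normalised through Lemma \ref{lem:8}. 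Reassembling $C_A$ and $C_B$ into $p$ and taking $\prec=\emptyset$ produces a depth-$1$ Pauli flow, whence Theorem \ref{thm:Paulisuf} re-implements the computation with all measurements in a single round.

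The hard part is this last upgrade: turning ``$v$ above $u$'' exclusions into ``$v\neq u$'' exclusions, i.e. eliminating the order entirely rather than merely shortening the longest chain. This is where the supernormal form is indispensable and where bipartiteness is genuinely used — the decoupling kills the $X/Z$ cross-talk, but one still has to show that the residual coupling \emph{within} each part (every measured $v$ simultaneously excluding $u$) can be met in one layer instead of being forced into a nontrivial chain. I would treat the mixed measurement types ($\lambda_u=\{X\}$, $\{Z\}$, or $\{X,Z\}$) and the measured input qubits — which by the Remark after Property \ref{prop:sim} are necessarily $X$-measured, so never trigger a $Z$-condition and cannot appear in any $p(v)$ — as bookkeeping refinements of this argument rather than as genuinely separate cases.
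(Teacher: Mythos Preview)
Your proposal is correct and takes essentially the same approach as the paper: invoke Theorem~\ref{thm:paulinec} to obtain a Pauli flow, use Lemma~\ref{lem:8} to upgrade it to the supernormal form, observe that this $p$ together with the empty order is itself a Pauli flow, and feed it back through Theorem~\ref{thm:Paulisuf} to get all corrections landing in $O$. The $\mathbb F_2$ biadjacency-matrix decoupling you set up is extra scaffolding the paper does without --- once Lemma~\ref{lem:8} delivers a $p$ with $\odd{p(u)}\setminus(O\cup\lambda^{-1}(\{Z\}))=\{u\}\setminus\lambda^{-1}(\{Z\})$ and $p(u)\setminus(O\cup\lambda^{-1}(\{X\}))=\{u\}\setminus\lambda^{-1}(\{X\})$, the paper checks in one line that $(p,\emptyset)$ satisfies the real Pauli-flow conditions and defines $\mathtt x'(u)=p(u)\setminus(\lambda^{-1}(\{X\})\cup\{u\})\subseteq O$ and $\mathtt z'(u)=\odd{p(u)}\setminus(\lambda^{-1}(\{Z\})\cup\{u\})\subseteq O$; your matrix splitting is really a repackaging of how Lemma~\ref{lem:8} is \emph{proved} rather than an ingredient in how it is \emph{used} here.
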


The rest of the section is dedicated to the proof of Theorem \ref{thm:constdepth}. According to Theorem \ref{thm:paulinec}, a real MBQC is robustly deterministic if and only if the underlying open graph has a Pauli flow. To prove that all the measurements can be performed in parallel in the bipartite case we point out the existence of a particular correction strategy which ensures that  each measurement is corrected using output qubits only. 
\begin{lemma}\label{lem:8} Given a bipartite graph $G$, $I, O\subseteq V(G) $ and $\lambda:O^c\to \{\{X\}, \{Z\},$ $ \{X,Z\}\}$, 
if $(G,I,O,\lambda)$ has a Pauli flow then there exists  $p:O^c\to 2^{I^c}$ s.t.:
\begin{eqnarray*}
\odd{p(u)} \setminus  (O\cup \lambda^{-1}(\{Z\})) & =& \{u\}\setminus  \lambda^{-1}(\{Z\})\\
{p(u)} \setminus  (O\cup \lambda^{-1}(\{X\})) & = &\{u\}\setminus  \lambda^{-1}(\{X\})
\end{eqnarray*}
\end{lemma}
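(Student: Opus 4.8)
The plan is to start from an arbitrary Pauli flow $(p_0,<)$ guaranteed by hypothesis and transform it into a new function $p$ (dropping the order) whose correction sets only hit output qubits, modulo the Pauli-allowance sets. The key structural fact I would exploit is the bipartiteness of $G$: if $G$ has parts $V_1,V_2$, then for any vertex $u\in V_i$ we have $u\notin N(u)$ (no self-loops, standard) and moreover $\odd{A}$ and $A$ interact nicely with the colour classes — in particular, for $A\subseteq V_i$, $\odd{A}\subseteq V_{3-i}$, so $A\cap\odd A=\emptyset$. This gives a clean separation between "the $p$-side" and "the $\odd p$-side" that does not exist for general graphs, and it is what should let us push all corrections onto $O$.

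The main engine will be a \emph{linear-algebra over $\mathbb F_2$} argument. Encode subsets of $I^c$ and of $O^c$ as vectors, and let $M$ be the $O^c\times I^c$ "cut matrix" whose $(u,w)$ entry is $1$ iff $w\in N(u)$, so that $\odd{p(u)}\cap O^c$ is read off from $Mp(u)$. The simplified real Pauli flow conditions $(i),(ii)$ say: for $u$ with $X\in\lambda_u$, $u\in\odd{p(u)}$ and $u\notin\odd{p(v)}$ for the relevant $v\ge u$ with $v\notin O\cup\{u\}$; symmetrically for $Z$ using $p$ instead of $\odd p$. I would argue that the "forbidden" conditions involving $v\ge u$ can be \emph{eliminated} by working greedily from the $<$-maximal vertices downward and repeatedly adding (the $p$ of) already-processed later vertices to $p(u)$; because of the forbidden-set conditions, each such addition is "safe" and after finitely many steps every $p(u)$ has the property that for $v\in O^c$, $v\neq u$, $v\notin\lambda^{-1}(\{Z\})$ implies $v\notin\odd{p(u)}$, and $v\notin\lambda^{-1}(\{X\})$ implies $v\notin p(u)$. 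This is exactly the content of the two displayed equations once one checks the $\{u\}$ on the right-hand side is forced by the non-forbidden part of $(i),(ii)$ (namely $u\in\odd{p(u)}$ when $X\in\lambda_u$, etc., and the trivial case analysis when $u$ is purely an $X$- or $Z$-vertex).

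Concretely, the steps in order: (1) recall the simplified conditions and fix a Pauli flow $(p_0,<)$; (2) set up the $\mathbb F_2$ notation and record the bipartite identities ($A\subseteq V_i\Rightarrow \odd A\subseteq V_{3-i}$, self-loop-freeness); (3) process vertices in a linear extension of $<$ from top to bottom, defining $p(u)$ from $p_0(u)$ by symmetric-differencing in $p(v)$'s of already-handled vertices $v>u$ in $O^c$ so as to kill membership of $v$ in $\odd{p(u)}$ (when $v\notin\lambda^{-1}\{Z\}$) or in $p(u)$ (when $v\notin\lambda^{-1}\{X\}$), using the forbidden-set conditions to see this does not reintroduce a violation at $u$ itself; (4) check that at the end the two displayed set-equalities hold — the $\supseteq$ direction is the easy "$u\in\odd{p(u)}$ / $u\in p(u)$ when the right plane is present" part plus handling $\lambda_u=\{X\}$ or $\{Z\}$, and the $\subseteq$ direction is what the processing guarantees; (5) note we never used the order in the final statement, as desired.

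The hard part will be Step (3): verifying that the greedy elimination actually terminates and is consistent — i.e., that when I add $p(v)$ into $p(u)$ to remove $v$ from (say) $\odd{p(u)}$, I do not simultaneously spoil the condition at some already-processed $v'>u$, and that I can simultaneously control both the "$\odd p$" obstruction and the "$p$" obstruction for vertices that are in neither $\lambda^{-1}\{X\}$ nor $\lambda^{-1}\{Z\}$, i.e. the two-Pauli-plane vertices $\lambda_v=\{X,Z\}$. Here I expect bipartiteness to be doing the real work: because $p(v)$ and $\odd{p(v)}$ live in opposite colour classes, adjusting $p(u)$ to fix a "$p$-side" obstruction from some $v$ cannot disturb an "$\odd p$-side" obstruction from a same-colour vertex, which decouples the two families of conditions and makes the greedy procedure well-founded. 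I would make this precise by choosing, at each step, to add $p_0(v)$ rather than the current $p(v)$, or by an explicit inclusion–exclusion over the set of "bad" later vertices, and then a short $\mathbb F_2$ computation closes it.
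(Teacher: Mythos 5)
Your overall strategy is the same as the paper's: transform a given Pauli flow $(g,<)$ into a new map $p$ by processing vertices from the $<$-maximal ones downward and symmetric-differencing into $p(u)$ the (already corrected) correction sets of later bad vertices, with bipartiteness providing the decoupling that makes the two families of conditions repairable simultaneously. However, there is a concrete gap in your Step (3). The structural fact you lean on --- ``$p(v)$ and $\odd{p(v)}$ live in opposite colour classes'' --- is false unless $p(v)$ is monochromatic: a general $p(v)\subseteq I^c$ meets both parts $V_0,V_1$, so $\odd{p(v)}$ does too, and adding the whole of $p(v)$ (or of $p_0(v)$) to $p(u)$ toggles $v$ in \emph{both} $p(u)$ and $\odd{p(u)}$ whenever $\lambda_v=\{X,Z\}$, which is exactly the case you flag as hard. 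Neither of your proposed remedies (using $p_0(v)$ instead of $p(v)$, or an unspecified inclusion--exclusion) resolves this.

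The missing idea --- which is the actual crux of the paper's proof --- is to split each correction set along the bipartition: for $v\in V_i\setminus O$ write $p_X(v)=p(v)\cap V_i$ and $p_Z(v)=p(v)\cap V_{1-i}$. These halves are monochromatic, so $v\in p(v)\Leftrightarrow v\in p_X(v)$ and $v\in\odd{p(v)}\Leftrightarrow v\in\odd{p_Z(v)}$. One then sets
$p(u):=g(u)\oplus\bigl(\bigoplus_{v\in \odd{g(u)}\setminus\{u\},\,X\in\lambda_v} p_Z(v)\bigr)\oplus\bigl(\bigoplus_{v\in g(u)\setminus\{u\},\,Z\in\lambda_v} p_X(v)\bigr)$,
a recursion that is well founded because the Pauli flow conditions force all such $v$ to satisfy $u<v$. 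The induction hypothesis (the two displayed identities for each $v>u$) splits, again by monochromaticity, into four identities showing that each added half removes exactly $\{v\}$ from the intended side and contributes nothing (modulo $O$ and the relevant $\lambda^{-1}$ set) to the other side. With that splitting supplied, your plan becomes the paper's proof; without it, the greedy elimination does not decouple for the $\{X,Z\}$-measured vertices and the termination/consistency argument you defer does not go through as described.
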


\begin{proof}
		See appendix \ref{app:lemma8}
\end{proof}

This particular correction strategy corresponds to a king of \emph{super-normal form}. Indeed it is known that Gflow can be put into the so called $Z$- or $X$-normal form but not both at the same time (see \cite{Hamrit2015} for details). Lemma \ref{lem:8} shows, roughly speaking, that the Pauli flow in the real bipartite case can be put in both normal forms at the same time. 

\begin{proof}[Proof of Theorem \ref{thm:constdepth}]
			Given a robustly deterministic real bipartite MBQC $(G,I,O,\lambda, \alpha, \mathtt x, \mathtt z)$, according to Theorem \ref{thm:paulinec}, $(G,I,O,\lambda)$ has a Pauli flow, so according to Lemma \ref{lem:8} there exists $p$ s.t. $
\odd{p(u)} \setminus  (O\cup \lambda^{-1}(\{Z\}))  = \{u\}\setminus  \lambda^{-1}(\{Z\})$ and ${p(u)} \setminus  (O\cup \lambda^{-1}(\{X\}))  = \{u\}\setminus  \lambda^{-1}(\{X\})$. 
Notice that $(p,\emptyset)$ is a Pauli flow for $(G,I,O,\lambda)$, thus according to Theorem \ref{thm:Paulisuf}, $(G,I,O,\lambda, \alpha, \mathtt x', \mathtt z')$ is robustly deterministic where $\mathtt x'= u\mapsto  p(u)\setminus (\lambda^{-1}(\{X\}) \cup \{u\})$ and $\mathtt z'= u\mapsto \odd{p(u)}\setminus (\lambda^{-1}(\{Z\}) \cup \{u\})$. Both $(G,I,O,\lambda, \alpha, \mathtt x, \mathtt z)$ and $(G,I,O,\lambda, \alpha, \mathtt x', \mathtt z')$ implement the same computation, and $\forall u\in O^c$ $\mathtt x'(u)\subseteq O$ and $\mathtt z'(u)\subseteq O$ which implies that all measurements of the latter MBQC can be performed in parallel. 
\end{proof}

\subsection{Interactive proofs}
\label{subs:intpro}

The starting point of our work has been a sentence of McKague in \cite{McKague}. In the future work section, McKague wonders how his work could be used to build an interactive prover with only two provers. 
The problem that McKaque wants to solve is the following.
We imagine a classical verifier, which is a computer with classical resources, who wants to perform a computation using some non-communicating quantum provers. The quantum provers are computers with quantum resources. In fact, the classical verifier wants to achieve his computation using the quantum power of quantum provers.
In this model, the hard point to breakthrough is that we want the verifier to detect cheating behavior of some provers.
The model should guarantee the verifier that the result of the computation made by the provers is correct: if a prover has cheated and not computed what he was asked, the verifier should be able to detect it.
We specify that the provers, in this model, cannot communicate one with the others: each prover can try to cheat on his own but he does not have the power to do it by exchanging information with the others.
McKague, in \cite{McKague}, proves that it is possible to imagine a protocol in which the computation can be performed by the classical verifier using a polynomial number of quantum provers.
To achieve this goal, McKague uses two main tools, one of them being Measurement Based Quantum Computation in the $(X,Z)$ plane. Mhalla and Perdrix, in \cite{MP13}, prove that there exists a grid that enables to perform a universal computing in the $(X,Z)$ plane. Usually, the $(X,Y)$ plane, first known to allow universal computation is preferred.
In his work, McKague needs the $(X,Z)$ plane: to be able to detect cheating behavior, McKague needs to compute in the reals. The conjugation operation that can be performed in other planes is a problem to detect some cheatings.

In his future work section, McKague argues that  most his work could be used to improve his result to the use of only two provers. The main difficulty he points out is to build a bipartite graph to compute with. His self-testing skill, which is the second important tool of his work, can be applied only if the graph does not have any odd cycle.
Therefore, the question we wanted to answer was whether one could build a universal bipartite grid for the $(X,Z)$-plane. 
Our Theorem \ref{thm:constdepth} shows that in the real case a bipartite graph is not very powerful to compute: it is far from being universal. Therefore, at best, new skills will be needed to adapt McKague's method to interactive proofs with two provers.

\section{Conclusion and future work}

In this paper, we made substantial steps in understanding MBQC world. The first important one is this equivalence between being robustly deterministic and having a Pauli flow for a real-MBQC. Since it does not hold for $\{X,Y\}$- and $\{Y,Z\}$-planes, a natural question is how one  can modify the Pauli flow definition to obtain a characterisation of determinism in these cases? 
A bi-product of the characterisation of robust determinism for real MBQC is the low comutational power of real bipartite MBQC. It would be interesting to compare the computational power of real bipartite MBQC and of commuting quantum circuits. There are some good reasons to think that the power of real bipartite MBQC is exactly the same as those commuting quantum circuits.
Taking a global view of the MBQC domain, some advances we make in this paper, and a good direction for further research should be to better understand  the specificity of each plane in the power of the MBQC model and how the ability to perform a deterministic computation is linked to this power.
Finally, another open question is the existence of an efficient algorithm for deciding whether a given open graph has a Pauli flow, and which produces a Pauli flow of optimal depth when it exists. Such an algorithm exists for Gflow~\cite{MP08-icalp}. 

\bibliographystyle{plainurl}

\appendix

\section{Quantum Computing in a Nutshell}\label{QC}  
The state of a given a finite set (or register) $A$ of qubits is a unit vector $\ket \phi \in \mathbb C^{\{0,1\}^A}$. The so-called classical states $\{\ket x:x\in \{0,1\}^A\}$ of the register $A$ form an orthonormal basis $\mathbb C^{\{0,1\}^A}$, thus any state $\ket \phi$ of $A$ can be described as $\ket \phi = \sum_{x\in \{0,1\}^A}\alpha_x\ket x$ s.t. $\sum_{x\in \{0,1\}^A}|\alpha_x|^2 =1$.  
Given two distinct registers $A$ and $B$, if the state of $A$ is $\ket \phi = \sum_{x\in \{0,1\}^A}\alpha_x\ket x$ and  the state of $B$ is $\ket \psi = \sum_{y\in \{0,1\}^B}\beta_y\ket y$, then the state of the overall register $A\cup B$ is $\ket \phi \otimes \ket \psi = \sum_{x\in \{0,1\}^A,y\in \{0,1\}^B}\alpha_x\beta_y\ket {xy}$, where $xy$ is the concatenation of $x$ and $y$. 

The adjoint of a state $\ket \phi = \sum_{x\in \{0,1\}^A}\alpha_x\ket x\in \mathbb C^{\{0,1\}^A} $ is $\bra \phi  =(\ket{\phi})^\dagger =\sum_{x\in \{0,1\}^A}\alpha^
*_x\bra x\in \mathbb C^{\{0,1\}^A} \to 1$, where $\forall x,y\in \{0,1\}^A$, $\bra x \ket y = \delta_{x,y}$.  

Any quantum evolution can be decomposed into a sequence of \emph{isometries} and \emph{measurements}: An isometry $U : \mathbb C^{\{0,1\}^A} \to \mathbb C^{\{0,1\}^B}$ is a linear map s.t.~$U^\dagger U = I$, (i.e.~$\forall x\in \{0,1\}^A, (U\ket x)^\dagger (U\ket x) = 1$), which transforms the state $\ket \phi$ into $U\ket \phi$. Famous examples of isometries are the unitary evolutions which correspond to the case $|A|=|B|$. The simplest example of unitary transformations are the so-called one-qubit Pauli operators $X$, $Y$, $Z$: $X= \ket x\mapsto \ket {1-x}$, $Z = \ket x\mapsto (-1)^x \ket x$ and $Y = iXZ$.  An example of an isometry which is not a unitary evolution is, given a one-qubit state $\ket \psi \in \{0,1\}^{\{u\}}$ and a register $A$ s.t. $u\notin A$, the map $\ket \psi_u :  \mathbb C^{\{0,1\}^A} \to \mathbb C^{\{0,1\}^{A\cup \{u\}}} = \ket \phi \mapsto \ket \phi\otimes \ket{\psi}$ which consists of adding a qubit $u$ in the state $\ket \psi$ to the register $A$. 

A measurement is a fundamentally probabilistic evolution which produces a classical outcome and transforms the state of the quantum system. We consider in this paper only destructive measurements which means that the measured qubit is consumed by the measurement: measuring a qubit $u$ of a register $A$ transforms the state $\ket \phi \in \{0,1\}^A$ into a state $\ket \psi \in \{0,1\}^{A\setminus \{u\}}$. Moreover, we will consider only one-qubit measurements, also called local measurements. A 1-qubit measurement is characterised by an observable $\mathcal O$, i.e. an hermitian operator acting on one qubit. We assume $\mathcal O$ has two distinct eigenvalues $1$ and $-1$. Let $\ket {\phi_0}$ and $\ket {\phi_1}$ be the corresponding eigenvectors. 
A measurement according to $\mathcal O$ of a qubit $u$ of a register $A$ in the state $\ket \psi\in \mathbb C^{\{0,1\}^A}$ produces the classical outcome $0$ (resp. $1$) and the state  $\frac{\bra{\phi_0}_u \ket\psi}{\sqrt{\bra \phi \ket {\phi_0}_u\bra{\phi_0}_u \ket\psi}}$ (resp. $\frac{\bra{\phi_1}_u \ket\psi}{\sqrt{\bra \phi \ket {\phi_1}_u\bra{\phi_0}_u \ket\psi}}$) with probability $\bra \phi \ket {\phi_0}_u\bra{\phi_0}_u \ket\psi$  (resp. $\bra \phi \ket {\phi_1}_u\bra{\phi_1}_u \ket\psi$), where $\bra {\phi_1}_u : \mathbb C^{\{0,1\}^{A\cup \{u\}}} \to \mathbb C^{\{0,1\}^A}$ is the adjoint of $\ket {\phi_1}_u$. 

A quantum evolution composed of $k$ 1-qubit measurements and $n$ isometries (in any order) has $2^k$ possible evolutions and is hence represented by $2^k$ linear maps $L_s$ indexed by the possible sequences of classical outcomes. The quantum evolution should satisfy the condition $\sum_{s\in \{0,1\}^k}L^\dagger_sL_s = I$. It can be obtained as the composition of isometries and measurements as follows: a measurement is a pair $\{\bra {\phi_0}, \bra {\phi_1}\}$, an isometry $U$ is a singleton $\{U\}$ and the composition of two quantum evolutions is $\{L_s:s\in \{0,1\}^k\}\circ\{M_t:t\in \{0,1\}^m\} = \{L_sM_t :s\in \{0,1\}^k, t\in \{0,1\}^m\}$. 

A probability distribution of quantum states, say $\{(\ket {\phi_i}, p_i)\}_i$ can be represented as a density matrix $\rho = \sum_{i}p_i\ket {\phi_i}\bra{\phi_i}$. Two probability distributions of quantum states leading to the same density matrix are indistinguishable. A quantum evolution $\{L_s : s\in \{0,1\}^k\}$ transforms $\rho$ into $\sum_{s\in \{0,1\}^k}L_s\rho L^\dagger_s$.  

\section{Proof of property \ref{prop:sim}}
Pauli flow has been introduced in \cite{BKMP07}, as follows: 
\begin{definition}[Pauli Flow \cite{BKMP07}]
An open graph state $(G,I,O,\lambda)$ has \emph{Pauli flow} if there exists a map $p: O^c \rightarrow 2^{I^c}$ and a strict partial order $<$ over $O^c$ such that  $\forall u,v\in O^c$,
    \\---(P1)\ if $v\in p(u)$, $u\neq v$, and $\lambda_v\notin \{\{X\},\{Y\}\}$ then $u<v$,
    \\---(P2)\ if $v\le u$, $u\neq v$, and $\lambda_v\notin \{\{Y\},\{Z\}\}$  then $v\notin \odd{p(u)}$,
\\---(P3)\ if $v\le u$, $u\neq v$, and $\lambda_v=\{Y\}$ then $v\in p(u) \Leftrightarrow v\in \odd{p(u)}$,
\\---(P4)\ if $\lambda_u= \{X,Y\}$ then $u\notin p(u)$ and $u\in \odd{p(u)}$,
\\---(P5)\ if $\lambda_u=\{X,Z\}$ then $u\in p(u)$ and $u\in \odd{p(u)}$,
\\---(P6)\ if $\lambda_u=\{Y,Z\}$ then $u\in p(u)$ and $u\notin \odd{p(u)}$,
\\---(P7)\ if $\lambda_u=\{X\}$ then $u\in \odd{p(u)}$,
\\---(P8)\ if $\lambda_u=\{Z\}$ then $u\in p(u)$,
 \\---(P9)\ if $\lambda_u=\{Y\}$ then either:\;\; $u\notin p(u) \; \& \; u\in \odd{p(u)}$ \;\; or \;\; $u\in p(u) \; \& \; u\notin \odd{p(u)}$.
 
\noindent where $u\le v$ iff $\neg (v<u)$.
\end{definition}

\label{app:prop1}
		First of all, (P9) can be simplified to: if $\lambda_u=\{Y\}$ then $u\in \codd{p(u)}$.
		Let's now begin to rewrite the block (P4) to (P9).
		Using (P4), (P5) and (P7), we can say that $X \in \lambda_u \Rightarrow u \in \odd{p(u}$. Also, (P4), (P6) and (P9) enable us to show that $Y \in \lambda_u \Rightarrow u \in \codd{p(u)}$ and (P5), (P6) and (P8) that $Z \in \lambda_u \Rightarrow u \in p(u)$ is correct.
                Conversely, we can go back as easily to property (P4) to (P9) from $X\in \lambda_u \Rightarrow   u\in \odd{p(u)}$, $Y\in \lambda_u \Rightarrow   u\in \codd{p(u)}$ and $Z\in \lambda_u \Rightarrow   u\in {p(u)}$.

		To achieve the proof, we need to show that given a $u \in O^c$, for all $v \in O^c$, (P1), (P2) and (P3) are equivalent to the fact that if $v \leq u$ and $v \neq u$, then:
		\\---(Q1) $X \in \lambda_v \Rightarrow v \notin \odd{p(u)}$,
		\\---(Q2) $Y \in \lambda_v \Rightarrow v \notin \codd{p(u)}$,
		\\---(Q3) $Z \in \lambda_v \Rightarrow v \notin p(u)$.

		This equivalence is easier to prove once (P1), (P2) and (P3) are simplified to:
		\\---(P1') for $v \leq u$ and $v \neq u$, $\lambda_v \notin \{\{X\},\{Y\}\} \Rightarrow v \notin p(u)$,
		\\---(P2') for $v \leq u$ and $v \neq u$, $\lambda_v \notin \{\{Y\},\{Z\}\} \Rightarrow v \notin \odd{p(u)}$,
		\\---(P3') for $v \leq u$ and $v \neq u$, $\lambda_v = \{Y\}$, $v \in p(u) \Leftrightarrow v \in \odd{p(u)}$.

		The end of the proof is a proof by exhaustion.
		To prove (Q1) from (P1'), (P2') and (P3'), let's say that if $X \in \lambda_v$, then $\lambda_v$ is $\{X\}$, $\{X,Y\}$ or $\{X,Z\}$. (P2') enables us to conclude. To prove (Q2), let's say that if $Y \in \lambda_v$, then $\lambda_v$ is $\{Y\}$, $\{X,Y\}$ or $\{Y,Z\}$. In the first case, we can conclude from (P3'), in the other two, the combination of (P1') and (P2') do the trick. The third case goes the same way.

		Conversely, let's show that we can prove (P1') from (Q1), (Q2) and (Q3). The proof of (P2') and (P3') will follow the same sketch. If $\lambda_v \notin \{\{X\},\{Y\}\}$, then $\lambda_v$ is $\{Z\}$ or one of the three planes. If $\lambda_v$ is $\{Z\}$, $\{X,Z\}$ or $\{Y,Z\}$, then we get the result from (Q3). If $\lambda_v$ is $\{X,Y\}$, then we know from (Q1) that $v \notin \codd{p(u)}$ and from (Q2) that $v \notin \odd{p(u)}$: that sufficient to assure that $v \notin p(u)$. That ends the proof.
		
\section{Proof of Theorem \ref{thm:paulinec}}
\label{app:thempaulinec}
$[\Rightarrow]$: Theorem \ref{thm:Paulisuf}. $[\Leftarrow]$:  We order the vertices of $G$ according to the order of the measurements: $V=\{v_0, \ldots , v_{n-1}\}$ s.t. $v_i\prec v_j \Rightarrow i<j$. For any $k\in [0,n)$, let $V_k = \{v_k, \ldots, v_{n-1}\}$. 
For any $S\subseteq I$, let the input in $S$ be $\ket 0$ and those in $I\setminus S$ be $\ket +$. Moreover for any $u\in O^c$, let $M_u$ be a Pauli measurement of qubit $u$ with $M \in \lambda_u$. 
The initial state -- before the first measurement -- is $\ket {\phi_0} = \ket 0_S\otimes (\prod_{u,v \in S^c s.t. (u,v)\in G} \Lambda Z_{u,v})\ket +_{S^c}$. 
We are going to use some technical claims to build the proof, for which the proofs are given in appendix \ref{app:thempaulinec}. The following claims exhibit Pauli operators which depend on the measurements performed during the computation, and which stabilize the intermediate states obtained during the computation: 

~\\
\noindent{\bf Claim 1.} There exists $n$ independent\footnote{$P^{(0)}, \ldots, P^{(n-1)}$ are independent if none of these Pauli operators can be obtained as the product of the other ones, even up to a global phase.} Pauli operators $P^{(0)}, \ldots, P^{(n-1)}:\mathbb C^{\{0,1\}^V}$ $\to \mathbb C^{\{0,1\}^V}$ s.t. $\forall i\in [0,n)$,  $P^{(i)}\ket{\phi_0} = \ket {\phi_0}$ and $\forall j<i$, $M_{v_j}$ and $P^{(i)}$ commute. 
 
  ~\\
  
   \noindent{\bf [Proof of Claim 1]} For all $u\in V$, let $R^{(u)} = \begin{cases}Z_u&\text{if $u\in S$}\\ X_uZ_{N_G(u)}&\text{otherwise}\end{cases}$. 
The initial state $\ket {\phi_0}$ is stabilized by $\mathcal S = \langle R_u\rangle_{u \in V}$, i.e.~$\ket {\phi_0}$ is the unique state (up to an irrelevant global phase) such that $\forall u\in V(G)$, $R_u\ket {\phi_0} = \ket {\phi_0}$. 

We use the following Gauss-elimination-like algorithm to produce some new generators $(P^{(u)})_{u\in V}$ of $\mathcal S$ which satisfy that $\forall v_i\in O^c, \forall v_j\in V$, if $i<j$ then $M_{v_i}$ and $P^{(v_j)}$ commute:

\indent For all $u\in V, P^{(u)}\leftarrow R^{(u)}$.\\
\indent For all $i\in [0,|V|-1]$:\\
\indent\indent let $A = \{j ~|~i\le j  \text{ and $M_{v_i}$ and $P^{(v_j)}$ anticommute}\}$.\\ 
\indent\indent If $A\neq \emptyset$, let $i_0\in A$ \\
\indent\indent \indent for all $j\in A\setminus \{i_0\}$, $P^{(v_j)} \leftarrow P^{(v_j)}.P^{(v_{i_0})}$\\
\indent\indent \indent $P_{v_i}\leftrightarrow P_{v_{i_0}}$.
\hfill$\Box$
  
  ~\\
  
  \noindent{\bf Claim 2.} After $k$ measurements and the corresponding corrections, the state $\ket{\phi_k}$ of the system\footnote{To simplify the proof we assume that the measurements are non destructive, which means that after, say,  a $Z$-measurement the measured qubit remains and is either in state $\ket 0$ of $\ket 1$ depending on the outcome of the measurement.   As a consequence, for any $k$, $\ket{\phi_k}$ is a $n$-qubit state.}    satisfies: $\forall i<k$, $M_{v_i}\ket{\phi_k}=\pm\ket{\phi_k}$ and $\forall i\ge k$, $P^{(i)}\ket{\phi_k}=\ket{\phi_k}$. 
~\\

\noindent{\bf [Proof of Claim 2.]}
    Since the first $k$ qubits of $\ket {\phi_k}$ have been measured according to $M_{v_0},...,$ $M_{v_{k-1}}$, for any $i<k$, $M_{v_i}\ket{\phi_k} = (-1)^{s_i}\ket {\phi_k}$ where $s_i\in\{0,1\}$ is the classical outcome of measurement of qubit $v_i$. To prove that $\forall i\ge k$, $P^{(i)}\ket{\phi_k}=\ket{\phi_k}$, notice that if a quantum state is the fixpoint of some operator $P$, the measurement of this state according to an observable which commute with $P$ produces, whatever the classical outcome is, a quantum state which is also a fixpoint of $P$. Thus, since $P^{(i)}$ stabilizes the initial state $\ket{\phi_0}$ and commutes with the first $k$ measurements, it stabilizes $\ket{\phi_k}$.   \hfill$\Box$

~\\
  \noindent{\bf Claim 3.} For any $k$, and any $n$-qubit Pauli operator $P$ s.t.~$P\ket {\phi_k} =  \pm\ket {\phi_k}$, $\exists B_S\subseteq S^c$, $\exists D_S\subseteq S$, $\exists F_S\subseteq V_k^c$ s.t.~$P=\pm X_{B_S}Z_{\odd{B_S}\Delta D_S}\prod_{u\in F_S}M_u$.
~\\
  
\noindent{\bf [Proof of Claim 3.]}
   Claim 2 provides $n$ independent Pauli operators which stabilize $\ket {\phi_k}$ thus $P$ must be a product of these operators: $\exists F_S \subseteq V_k^c$, $\exists Q\subseteq [k,n)$, s.t. $P = \pm \prod_{u\in F_S} M_u \prod_{i\in Q_S} P^{(i)}$. Since each $P^{(i)}$ is, according to Claim 1, a product $\prod_{u\in \Gamma_i} R_u$ where of $R_u = \begin{cases}Z_u&\text{if $u\in S$}\\ X_uZ_{N_G(u)}&\text{otherwise}\end{cases}$. As a consequence, $P=\pm X_{B_S}Z_{\odd{B_S}\Delta D_S}\prod_{u\in F_S}M_u$, where $D_S = (\Gamma_k\Delta \ldots \Delta \Gamma_{n-1})\cap S$ and $B_S = (\Gamma_k\Delta \ldots \Delta \Gamma_{n-1})\cap S^c$. 
\hfill $\Box$

At some step $k$ of  the computation, by the strongness hypothesis, the two possible outcomes of the measurement according to $M_{v_{k}}$ occur with probability $1/2$. Thus, thanks to the stepwise determinism hypothesis, there exists a real state  $\ket \psi$ on qubits $V\setminus \{v_{k}\}$ and $\theta\in[0,2\pi)$ s.t.
\vspace{-0.2cm}$$\vspace{-0.2cm}\ket {\phi_k} = \frac1{\sqrt 2}(\ket \uparrow_{v_{k}} \otimes \ket \phi_{V\setminus \{v_{k}\}} + e^{i\theta} \ket \downarrow_{v_{k}} \otimes X_{\mathtt x(v_{k})} Z_{\mathtt z(v_{k})}  \ket \phi_{V\setminus \{v_{k}\}})$$
where $\ket{\uparrow}\in \{\ket 0, \frac{\ket 0+\ket 1}{\sqrt 2}\}$ and $\ket \downarrow\in \{\ket 1, \frac{\ket 0-\ket 1}{\sqrt 2}\}$ are the eigenvectors of $M_{v_{k}}$. Since $\ket {\phi_k}$, $\ket{\uparrow}$, $\ket{\downarrow}$ and $\ket{\phi}$ are real states, $e^{i\theta}=(-1)^r$ for some $r\in \{0,1\}$.  Let $T$ be the Pauli operator s.t. $T\ket \uparrow = - \ket \downarrow$ and $T\ket \downarrow = (-1)^{|\mathtt x(v_{k+1})\cap \mathtt z(v_{k+1}) |}\ket \uparrow$. 

		Since $(-1)^r M_{v_{k}}T_{v_{k}}X_{\mathtt x(v_{k})} Z_{\mathtt z(v_{k})} \ket {\phi_k} = \ket {\phi_k}$, according to claim 3, 
                $\exists B_S\subseteq S^c, D_S\subseteq S, F_S\subseteq V_{k}^c$ s.t.~$M_{v_{k}}T_{v_{k}}X_{\mathtt x(v_{k})} Z_{\mathtt z(v_{k})}  = \pm X_{B_S} Z_{Odd(B_S)\Delta D_S} \prod_{u\in F_S} M_u$. Thus, 
\vspace{-0.2cm}$$\vspace{-0.2cm}T_{v_{k}}X_{\mathtt x(v_{k})} Z_{\mathtt z(v_{k})}  = \pm X_{B_S} Z_{Odd(B_S)\Delta D_S} \prod_{u\in F'_S} M_u$$
with $ B_S\subseteq S^c, D_S\subseteq S, F'_S\subseteq V_{k+1}^c$. 

The  equation above, involving $\mathtt x(v_{k})$ and $\mathtt z(v_{k})$, is the main ingredient to recover the Pauli flow conditions. However, this equation depends a priori on the choice of the measurements and the initial sates. The following claims, proved in appendix,  show how to get rid of this dependency.

~\\
 \noindent{\bf Claim 4.}   $B_S, D_S$, and $F'_S$ do not depend on $S$. Therefore, using the notation $F$, $B$ and $D$ respectively, we can notice that $D = D_{\emptyset} =\emptyset$, and $B=B_I\subseteq I^c$.
 ~\\

\noindent{\bf [Proof of Claim 4.]}
For any $S,S'\subseteq I$, \\$X_{B_S\Delta B_{S'}} Z_{Odd(B_S\Delta B_{S'})\Delta D_S \Delta D_{S'}} \prod_{u\in F'_S\Delta F'_{S'}} M_u = \pm I$,
so  $ \prod_{u\in F'_S\Delta F'_{S'}} M_u = $\\ $\pm X_{B_S\Delta B_{S'}} Z_{Odd(B_S\Delta B_{S'})\Delta D_S \Delta D_{S'}} $.

\noindent Since all $M_u{\in} \{X,Z\}$, the product in the RHS of the latter equation should only produce $X$ and $Z$ (but no $XZ$), thus 
 $(B_S\Delta B_{S'}) \cap ({Odd(B_S\Delta B_{S'})\Delta D_S \Delta D_{S'}})=\emptyset$ which is equivalent to $(B_S\Delta B_{S'}) \cap (D_S \Delta D_{S'}) = (B_S\Delta B_{S'}) \cap ({Odd(B_S\Delta B_{S'})})$. Thus $|(B_S\Delta B_{S'}) \cap (D_S \Delta D_{S'})|=0\bmod 2$, since for any set $A$, $|A\cap Odd(A)| = 0\bmod 2$. 

To prove that $F'_S = F'_{S'}$ we exhibit a particular input state such that the initial state is an eigenvector of $\prod_{u\in F'_S\Delta F'_{S'}} M_u$. It implies that when the measurements are performed, the last measurement of $F'_S\Delta F'_{S'}$ is going to be deterministic and thus contradicts the strongness assumption. As a consequence $F'_S\Delta F'_{S'}$ must be empty. 
The input state is constructed as follows: The qubits in $(B_S\Delta B_{S'})\cap I$ are initialised in  $\ket +$, the others in $\ket 0$. Since $|(B_S\Delta B_{S'})\cap (D_S\Delta D_{S'})|=0\bmod 2$ there exists a partition of $(B_S\Delta B_{S'})\cap (D_S\Delta D_{S'})$ into pairs of qubits $P=\{(u_i,v_i)\}_i$. For each pair in $P$, $\Lambda Z$ is applied on the corresponding qubits. The input state is then a fixpoint of $ X_{(B_S\Delta B_{S'})\cap I} Z_{D_S \Delta D_{S'}}$, 
thus after the entangling stage the overall state (including input and non input qubits) is an eigenstate of $X_{B_S\Delta B_{S'}} Z_{Odd(B_S\Delta B_{S'})\Delta D_S \Delta D_{S'}}$ which implies that the measurement according to $\prod_{u\in F'_S\Delta F'_{S'}} M_u$ is not strong. As a consequence, $F_S=F_{S'}$ which implies $B_S=B_{S'}$ and $D_S=D_{S'}$. \hfill $\Box$

~\\
 \noindent{\bf Claim 5.} $F$ and $B$ do not depend on the choice of Pauli measurements.
~\\

\noindent{\bf[Proof of Claim 5.]}
If  $F$ and $B$ depend on the choice of the Pauli measurements then there exists two choices which differ on a single measurement and differ on at least one of the sets $F$, $B$. 
Let $(M_u)_{u\in O^c}$ and $(M'_u)_{u\in O^c}$ these two choices and $u_0\in O^c$ s.t. $\forall u\neq u_0$, $M_u=M'_u$ and $M_{u_0}\neq M'_{u_0}$. 
Let $B$, $F$ and $B'$, $F'$ the sets associated with these two choices of measurements. We have $X_{B\Delta B'} Z_{Odd(B\Delta B')} = \prod_{u\in F\setminus F'}M_u\prod_{u\in F'\setminus F}M'_u\prod_{u\in F\cap F'}M_uM'_u$. Notice that  $\prod_{u\in F\cap F'}M_uM'_u=$ $\begin{cases}I&\text{if $u_0\notin F\cap F'$}\\X_{u_0}Z_{u_0}&\text{if $u_0\in F\cap F'$}\end{cases}$. Since  $|({B\Delta B'})\cap Odd(B\Delta B')|=0\bmod 2$, we know that $\prod_{u\in F\cap F'}M_uM'_u=I$. \\As a consequence, $X_{B\Delta B'} Z_{Odd(B\Delta B')} = \prod_{u\in F\Delta F'}M_u$. Using similar arguments that has been used above, one can provide a particular input such that the state is a fixpoint of $\prod_{u\in F\Delta F'}M_u$ which implies that, if $F\Delta F'\neq \emptyset$ the last measurement of $F\Delta F'$ is deterministic, contradicting the strongness hypothesis. Thus $F=F'$, as a consequence $B=B'$. 
    \hfill$\Box$

    We are now able to build a Pauli flow. Since $F$ does not depend on the choice of the measurements, the basis of measurement of the qubits in $F$ must not vary, i.e.~$F\subseteq \lambda^{-1}(\{X\})\cup \lambda^{-1}(\{Z\})$. As a consequence, defining $F_X = F\cap\lambda^{-1}(\{X\}))$ and $F_Z = F\cap\lambda^{-1}(\{Z\}))$, we have $T_{v_{k}}X_{\mathtt x(v_{k})} Z_{\mathtt z(v_{k})} = \pm X_{B\Delta F_X} Z_{Odd(B)\Delta F_Z }$. 
Defining $p(v_{k}):=B$ one can double check that for any partial order $\prec$ with respect to which $\mathtt x$ and $\mathtt z$ are extensive, $(p,\prec)$ is a Pauli flow. Indeed, if $X\in \lambda_{v_{k}}$, T anti-commutes with $X$, thus $u \in Odd(p(v_{k}))$. Similarly if $Z\in \lambda_u$, $u\in p(v_{k})$.

Let $u\le v_{k}$, $u\neq v_{k}$. Since  $\mathtt x$ and $\mathtt z$ are extensive, it implies that $u\notin  \mathtt x(v_{k})\cup\mathtt z(v_{k})$. If $u\in Odd(p(v_{k}))$, $u\in F_Z$ so $u\in \lambda^{-1}(\{Z\})$ which implies that $X\notin \lambda_u$. So $X\in \lambda_u \Rightarrow u\notin Odd(p(v_{k}))$. Similarly $Z\in \lambda_u \Rightarrow u\notin p(v_{k})$
\hfill$\Box$

\section{Proof of Lemma  \ref{lem:8}}
\label{app:lemma8}

		Since $(G,I,O,\lambda)$ has a Pauli flow, there exists an order $<$ and $g:O^c\to 2^{I^c}$ s.t. $X\in \lambda_u \Rightarrow  u\in \odd{g(u)}\setminus \left( \displaystyle{ \bigcup_{\substack{v \geq u \\ v \neq u}} \odd{g(v)}} \right)$ and $Z\in \lambda_u \Rightarrow  u\in {g(u)}\setminus \left( \displaystyle{ \bigcup_{\substack{v \geq u \\ v \neq u}} {g(v)}} \right)$. Since $G$ is bipartite, let $V_0,V_1$ be a bipartition of $V(G)$ s.t. $V_0$ and $V_1$ are independent sets, and let $R_i = V_i\setminus O$ and $p: O^c\to 2^{I^c}$ be defined as follows: $\forall i\in \{0,1\}$, and  $\forall u\in R_i$, 
$$p(u):= g(u)\oplus \left(\!\bigoplus_{\substack{v\in \odd{g(u)}\setminus \{u\}\\ \textup{s.t.} X\in \lambda_v}}\!\!\!\!\!\!\!\!\!\!p_Z(v)\right) \oplus \left(\!\bigoplus_{\substack{v\in g(u)\setminus \{u\} \\\textup{s.t.} Z\in \lambda_v}}\!\!\!\!p_X(v)\right) $$
where $\forall i\in \{0,1\}\forall v\in R_i$, $p_X(v)=p(v)\cap V_i$, and $p_Z(v)=p(v)\cap V_{1-i}$.

The inductive definition of $p$ is well founded as the definition of $p(u)$ only depends on $p(v)$ with $u<v$.

Let $u$ be maximal for $<$, $\odd{p(u)}=\odd{g(u)}$. For any $v<u$, $v\in (O\cup \lambda^{-1}(\{Z\}))^c$ implies $X\in \lambda_v$, so according to the Pauli flow condition, $v\notin \odd{p(u)}$. Moreover, $u\notin \lambda^{-1}(\{Z\}) \Leftrightarrow X\in \lambda_u \Rightarrow u\in \odd{p(u)}$, thus $\odd{p(u)}\setminus ((O\cup \lambda^{-1}(\{Z\}))  = \{u\}\setminus  \lambda^{-1}(\{Z\})$. 
Similarly ${p(u)} \setminus  (O\cup \lambda^{-1}(\{X\}))  = \{u\}\setminus  \lambda^{-1}(\{X\})$. 

By induction, for a given $u\in O^c$, assume the property is satisfied for all $v\in O^c$ s.t. $u<v$ which implies:
\begin{eqnarray*}
\odd{p_X(v)} \setminus  (O\cup \lambda^{-1}(\{Z\})) & =& \emptyset\\
\odd{p_Z(v)} \setminus  (O\cup \lambda^{-1}(\{Z\})) & =& \{v\}\setminus  \lambda^{-1}(\{Z\})\\
{p_X(v)} \setminus  (O\cup \lambda^{-1}(\{X\})) & = &\{v\}\setminus  \lambda^{-1}(\{X\})\\
{p_Z(v)} \setminus  (O\cup \lambda^{-1}(\{X\})) & = &\emptyset
\end{eqnarray*}
\begin{align*}
&p(u)\setminus  (O\cup \lambda^{-1}(\{Z\})) = g(u)\setminus  (O\cup \lambda^{-1}(\{Z\})) \oplus \\
&  \left(\bigoplus_{\substack{v\in \odd{g(u)}\setminus \{u\} \\\textup{s.t.} X\in \lambda_v}}\!\!\!\!\!\!\!\!\!\!\!p_Z(v)\setminus  (O\cup \lambda^{-1}(\{Z\})) \right) \oplus \left(\bigoplus_{\substack{v\in g(u)\setminus \{u\}\\ \textup{s.t.} Z\in \lambda_v}}\!\!\!\!\!\!\!p_X(v)\setminus  (O\cup \lambda^{-1}(\{Z\})) \right)\\
&=g(u)\setminus  (O\cup \lambda^{-1}(\{Z\})) \oplus  \left(\bigoplus_{\substack{v\in g(u)\setminus \{u\}\\ \textup{s.t.} Z\in \lambda_v}}\{v\}\setminus  (O\cup \lambda^{-1}(\{Z\})) \right)\\
&=g(u)\setminus  (O\cup \lambda^{-1}(\{Z\}))  \oplus (g(u)\setminus \{u\})\setminus  (O\cup \lambda^{-1}(\{Z\})\\
&=\{u\}\setminus  (O\cup \lambda^{-1}(\{Z\})) =\{u\}\setminus  \lambda^{-1}(\{Z\}) 
\end{align*}
Similarly,\begin{align*}
&\odd{p(u)}\setminus  (O\cup \lambda^{-1}(\{Z\})) = \odd{g(u)}\setminus  (O\cup \lambda^{-1}(\{Z\})){\oplus}  \\
&\!\!\left(\!\!\bigoplus_{\substack{v\in \odd{g(u)}\setminus \{u\} \\\textup{s.t.} X\in \lambda_v}}\hspace{-0.8cm}\odd{p_Z(v)}\setminus  (O\cup \lambda^{-1}(\{Z\})) \!\!\right) \!\!{\oplus}\!\! \left(\!\!\bigoplus_{\substack{v\in g(u)\setminus \{u\} \\\textup{s.t.} Z\in \lambda_v}}\hspace{-0.45cm}\odd{p_X(v)}\setminus  (O\cup \lambda^{-1}(\{Z\}))\!\! \right)\\
&=\odd{g(u)}\setminus  (O\cup \lambda^{-1}(\{Z\})) \oplus  \left(\bigoplus_{\substack{v\in \odd{g(u)}\setminus \{u\}\\ \textup{s.t.} X\in \lambda_v}}\{v\}\setminus  (O\cup \lambda^{-1}(\{Z\})) \right)\\
&=\odd{g(u)}\setminus  (O\cup \lambda^{-1}(\{Z\}))  \oplus (\odd{g(u)}\setminus \{u\})\setminus  (O\cup \lambda^{-1}(\{Z\})\\
&=\{u\}\setminus  (O\cup \lambda^{-1}(\{Z\})) =\{u\}\setminus  \lambda^{-1}(\{Z\}) 
\end{align*}

\end{document}